\renewcommand{\algorithmicrequire}{\textbf{Input: }}
\newcommand{\End}{\hfill $\square$}
\DeclareMathOperator*{\argmin}{argmin}
\newcommand{\mbb}[1]{\mathbb{#1 }}
\newcommand{\mbf}[1]{\mathbf{#1}} 
\newcommand{\mcl}[1]{\mathcal{#1}}
\newcommand{\pce}[1]{\mathsf{#1}}
\newcommand{\pcecoe}[2]{\mathsf{#1}^{#2}}
\newcommand{\relx}{(\omega)}
\newcommand{\trar}[2]{\mathbf{#1}_{[0,{#2}]}}
\newcommand{\set}[1]{\mathbb{#1}}    
\newcommand{\inst}[1]{_{#1}}
\newcommand{\splx}[1]{\mcl{L}^2(\Omega, \mathcal{F}, \mu; \mathbb{R}^{#1})}
\newcommand{\spl}{\mcl{L}^2(\Omega, \mathcal{F}, \mu; \mathbb{R})}
\newcommand{\diff}{\mathop{}\!\mathrm{d}}
\newcommand{\Tini}{T_{\text{ini}}}
\newcommand{\mean}{\mbb{E}}
\newcommand{\var}{\mbb{V}}
\newcommand{\prob}{\mbb{P}}
\newcommand{\Hankel}{\mcl{H}}
\newcommand{\rinv}{\dagger}
\newcommand{\ini}{_{\text{ini}}} 
\newcommand{\dimy}{n_y}
\newcommand{\dimx}{n_x}
\newcommand{\dimu}{n_u}
\newcommand{\dimw}{n_w}
\newcommand{\dimz}{n_z}
\newcommand{\I}{\mathbb{I}}
\newcommand{\N}{\mathbb{N}}
\newcommand{\R}{\mathbb{R}}
\newtheorem{problem}{Problem}
\newtheorem{theorem}{Theorem}
\newtheorem{corollary}{Corollary}
\newtheorem{lemma}{Lemma}
\newtheorem{proposition}{Proposition}
\newtheorem{definition}{Definition}
\newtheorem{remark}{Remark}
\newtheorem{example}{Example}
\newtheorem{assumption}{Assumption}
\newcommand{\Xini}{ X_{\text{ini}}}
\newcommand{\xini}{ x_{\text{ini}}}
\newcommand{\xinipce}{ \pce{x}_{\text{ini}}}
\title{On a Stochastic Fundamental Lemma and \\ Its Use for Data-Driven Optimal Control}
\author{Guanru Pan$^{\dagger}$, Ruchuan Ou$^{\dagger}$ and Timm Faulwasser$^{\star}$
\thanks{$^{\dagger}$: Equally contributing first authors. $^{\star}$: Corresponding author.}%
\thanks{Guanru Pan, Ruchuan Ou and Timm Faulwasser are with Institute for Energy Systems, Energy Efficiency and Energy Economics, TU Dortmund University, Dortmund,  Germany
        {\tt\small $\{$guanru.pan,ruchuan.ou$\}$@tu-dortmund.de and timm.faulwasser@ieee.org} }
}
\begin{document}

\maketitle
\thispagestyle{empty}
\pagestyle{empty}

\begin{abstract}
Data-driven control based on the fundamental lemma by Willems et al. is frequently considered for deterministic LTI systems subject to measurement noise. However, besides measurement noise, stochastic disturbances might also directly affect the dynamics. In this paper, we leverage Polynomial Chaos Expansions (PCE) to extend the deterministic fundamental lemma towards stochastic systems. This extension allows to predict future statistical distributions of the inputs and outputs for stochastic LTI systems in data-driven fashion, i.e., based on the knowledge of previously recorded input-output-disturbance data and of the disturbance distribution we perform  data-driven uncertainty propagation. 
Finally, we analyze data-driven stochastic optimal control problems and we propose a conceptual framework for data-driven stochastic predictive control.  Numerical examples illustrate the efficacy of the proposed concepts.
\end{abstract}

\textbf{Keywords}:
Data-driven control,  fundamental lemma, learning systems,model predictive control, optimal control, polynomial chaos,  stochastic systems,  uncertainty quantification.

\section{INTRODUCTION}
Recently,  data-driven system representations based on the fundamental lemma by Willems et al. \cite{Willems2005} are subject to renewed and increasing research interest. The pivotal insight of the lemma is that the trajectories of any controllable Linear Time Invariant (LTI) system can be described without explicit identification of a state-space model. Specifically, provided persistency of excitation holds, the system trajectories are contained in the column space of a Hankel matrix constructed from recorded trajectories of input and output data.
In absence of process disturbances and measurement noise, this data-driven system representation is exact. Beyond the deterministic controllable LTI setting, there are recent variants of the lemma, e.g., extensions to nonlinear systems \cite{Alsalti21,lian21k}, to linear parameter-varying systems \cite{Verhoek21}, and to linear network systems \cite{Allibhoy20}. Other extensions include uncontrollable systems \cite{Mishra20,Yu2021}, and input affine systems \cite{Berberich21,Martinelli2022}. For recent overviews we refer to \cite{DePersis19, Markovsky21r}.

Data-driven control design and system analysis with not necessarily persistently exciting input data has been investigated in \cite{VanWaarde20}. The exploitation for predictive control 
has been popularized by \cite{Coulson2019}, while an earlier attempt can be found in \cite{Yang15}. For stability analysis of data-driven predictive control see~\cite{Berberich20}, while applications are discussed in \cite{Huang19,Lian21,Berberich21at,Carlet20,Bilgic22}.

Beyond the LTI setting, \cite{Coulson2019} proposes a heuristic approach to deal with measurement noise and mild system nonlinearities by introducing slack variables and regularization in the objective function. There is also a line of research focusing on the robustness with  respect to measurement noise and/or process disturbance: while \cite{Berberich20r,De21l,Van20n} consider the design of robust state feedback controllers to deal with process disturbance, \cite{Yin20} uses maximum likelihood to obtain an optimal Hankel representation, and  \cite{Coulson21a} views the measurement noise entering the Hankel matrix as a problem of distributional robustness.

The ultimate journal paper of Jan C. Willems~\cite{Willems12} as well as~\cite{Baggio17} provide a starting point for behavioral concepts for open stochastic systems. In a follow-up to the present paper we provide further results in this direction in~\cite{tudo:faulwasser22f}. However, to the best of the authors' knowledge, so far there appears to be no stochastic variant of the fundamental lemma. Moreover, intrusive uncertainty propagation and quantification---i.e., not relying on sampling or scenarios---and consequently the data-driven forward propagation of stochastic uncertainties through LTI dynamics represented by Hankel matrices are also open.

In the context of stochastic optimal control, and uncertainty quantification in general, Polynomial Chaos Expansions (PCEs) are an established method that can be applied in Markovian and non-Markovian settings. Its core idea is based on the observation that under mild technical assumptions random variables can be regarded as $\mcl{L}^2$ functions in a probability space and hence they admit representations in appropriately chosen polynomial bases. 
We refer to \cite{sullivan15introduction} for a general introduction to PCE and to \cite{Mesbah16,Heirung18} for recent overviews on stochastic model predictive control. Early works, which have popularized PCE for systems and control, include \cite{Fagiano12,kim13wiener,paulson14fast,Mesbah14,Sergio17}. 
Moreover, PCE allows computing statistical moments  efficiently \cite{Lefebvre20}, it has been used to analyze the region of attraction of stochastic systems \cite{Ahbe20}, and it finds application in power systems  \cite{Till19}. 

In this paper, we link the data-driven system representation via the fundamental lemma with the PCE approach for uncertainty propagation of LTI systems subject to process disturbance. Our contributions are as follows: 1) we present a stochastic variant of the fundamental lemma which enables prediction and propagation of the statistical distributions of the inputs and outputs over finite horizons.  The key observation is that the PCE coefficients of a stochastic LTI system satisfy dynamics with the same  matrices as the original system.
2) we present mild conditions under which a stochastic Optimal Control Problem~(OCP) in random variables can be formulated equivalently  in a finite-dimensional data-driven fashion without explicit knowledge of the system matrices. This reformulation is built upon knowledge or estimation of  disturbance realization trajectories. Hence 3), we also propose a strategy to estimate  disturbance realizations from input-output data without explicit system knowledge. Finally 4), drawing upon simulation examples, we demonstrate the efficacy of the proposed approach for data-driven stochastic optimal control.  

The remainder of the paper is as follows: Section~\ref{sec:problem_statement} gives details about the considered setting and revisits data-driven system representations. After a brief introduction to PCE, Section~\ref{sec:data_representation} and Section~\ref{sec:stochastic_OCPs} present the main results, i.e., the data-driven representation of stochastic LTI systems and the data-driven reformulation of a stochastic optimal control problem both using PCE. Section~\ref{sec:simulation} considers two numerical examples; the paper ends with conclusions in Section~\ref{sec:conclusion}.

\subsubsection*{Notation}
Let $(\Omega,\mathcal F,\mu)$ be a probability space with sample space~$\Omega$,  $\sigma$-algebra~$\mathcal F$, and probability measure~$\mu$.
Similarly, $\splx{n_z}$  is the space of random variables of dimension $n_z$ which have finite expectation and covariance. Let $Z: \I_{[0,T-1]} \rightarrow \splx{n_z}$ be a sequence of vector-valued random variables from time instant 0 to $T-1$.
We denote by $\mean[Z]$, $\var[Z]$, and $z\doteq Z(\omega) : \I_{[0,T-1]} \rightarrow \R^{n_z}$ its mean, variance, and realizations, respectively. The vectorization of $z$, respectively, $Z$ is written as $\trar{z}{T} \doteq [z_0^\top,z_1^\top, \dots,z_{T-1}^\top]^\top \in \R^{n_z T}$ and $\trar{Z}{T-1}$.
Throughout the paper, we denote the identity matrix of size $n$ by $I_n$ and $\|x\|_Q \doteq \sqrt{\frac{1}{2} x^\top Q x}$. For any matrix $Q\in \mbb{R}^{n\times m}$ with columns $q^1, \dots, q^m$, the column-space is denoted by $\mathrm{colsp}(Q) \doteq \mathrm{span}\left(\{q^1,\dots, q^m\}\right)$.

\section{Problem Statement \& Preliminaries} \label{sec:problem_statement}
\subsection{Model-Based Stochastic Optimal Control }

We consider stochastic discrete-time LTI systems
\begin{subequations}\label{eq:RVdynamics}
	\begin{align}
X\inst{k+1} &= AX\inst{k} +BU\inst{k}+ EW\inst{k}, \quad 
X_0=\Xini\\
Y\inst{k} &= CX\inst{k} +DU\inst{k}, \label{eq:RVdynamics_y}
	\end{align}
\end{subequations}
with state $X\inst{k}\in \mcl L^2(\Omega, \mcl F_k,\mu;\R^{n_x})$, input $U\inst{k}\in \mcl L^2(\Omega, \mcl F_k,\mu;\R^{\dimu})$, output $Y\inst{k}\in \mcl L^2(\Omega, \mcl F_k,\mu;\R^{\dimy})$, and process disturbance $W\inst{k}\in \mcl L^2(\Omega, \mcl F,\mu;\R^{\dimw})$ for $k\in \N$. In the underlying filtered probability space $(\Omega, \mcl F, (\mcl F_k)_{k\in \N}, \mu)$, the $\sigma$-algebra $\mcl F$ contains all available historical information, or more precisely,
\begin{equation}\label{eq:filtration}
\mcl F_0 \subseteq \mcl F_1 \subseteq ...  \subseteq \mcl F.
\end{equation}
Let $(\mcl F_k)_{k\in \N}$ be the smallest filtration that the stochastic process $X$ is adapted to, i.e.,
$
\mcl F_k = \sigma(X\inst{i},i\leq k)$,
where $\sigma(X\inst{i},i\leq k)$ denotes the $\sigma$-algebra generated by $X\inst{i},i\leq k$. Likewise, the stochastic input  $U\inst{k}$ and output $Y\inst{k}$ are also modelled as stochastic processes that are adapted to the filtration $(\mcl F_k)_{k\in \N}$, that is, $U\inst k$ and $Y\inst k$ only depend on $X\inst{0}, X\inst{1},...,X\inst{k}$. Note that the influence of the process disturbances $W\inst i,i\leq k$ is implicitly handled via the state recursion.
For more details on filtrations we refer to \cite{fristedt13modern}. 

Throughout the paper, we consider that the process disturbances $W\inst{k}$, $k\in \N$  are either identically independently distributed ($i.i.d.$), independently distributed, or dependently distributed random variables. For the sake of brevity, we present the main results for the case of $W\inst{k}$, $k\in \N$  being i.i.d.. We comment on the extension and major differences for the independently and dependently distributed cases in Remark~\ref{rem:dependentWk} in Section~\ref{sec:PCE_OCP}. Henceforth, we assume  that  the underlying probability distributions of $W\inst{k}$, $k\in \N$ are known. Additionally, the distribution of the initial condition $\Xini$ is also supposed to be known.

Our analysis commences with the following OCP. 
\begin{problem}[Stochastic OCP]\label{Problem1}
Given the initial condition $X_0=\Xini$ and random variables $W\inst{k},~k\in \I_{[0,N-1]}$, we consider the following OCP with horizon $N \in \N^+$,
\begin{subequations} \label{eq:stochasticOCP}
\begin{align}
   \min_{
\substack{
   \trar{X}{N-1} \in \splx{N\dimx}\\
      \trar{U}{N-1} \in \splx{N\dimu}  \\
      \trar{Y}{N-1} \in \splx{N\dimy}}
    }
   & \sum_{k=0}^{N-1} \mean \big[ \|Y_{k}\|^2_Q +\|U_{k}\|^2_R\big]  \label{eq:stochasticOCP_obj}\\
\text{subject to } \quad &\text{for } k \in \set I_{[0,N-1]}, \hspace{1cm} \notag \\
X_{k+1}=   AX_{k}+BU_{k} &
+E W_{k},\label{eq:OCPdynamic}
\quad X_0  = \Xini,
\\
Y_{k}=   CX_{k}+DU_{k} &
,\label{eq:OCPdynamic_Y}
\\
\prob [U_{k} \in  \set U ]\geq 1 - \varepsilon_u,& \label{eq:chance_U}\\
\prob [Y_{k}  \in  \set Y ]\geq 1 - \varepsilon_y,& \label{eq:chance_Y}
 \end{align}
\end{subequations}
where $Q \succeq 0$ and $R \succ 0$ are symmetric.
\End
 \end{problem}

Here,  we consider chance constraints for the inputs~\eqref{eq:chance_U} and the outputs \eqref{eq:chance_Y}. The underlying sets $\set{U}\subseteq \set{R}^{\dimu}$ and $\set{Y}\subseteq \set{R}^{\dimy}$ are assumed to be closed. Moreover, $1-\varepsilon_u$ and $1-\varepsilon_y$ specify the probabilities with which the---joint in the output dimension but individual in time---chance constraints shall be satisfied.\footnote{ However, notice that, depending on the considered distributions of process disturbances, the simultaneous consideration of chance constraints for inputs \textit{and} outputs may jeopardize feasibility of the OCP.}

For all $\omega \in \Omega$, the realization of $W\inst{k}$ is written as $
w\inst{k} \doteq W\inst{k}(\omega)$. We denote the state, input, and output realizations as $x_k \doteq  X\inst{k}(\omega) $, $u_k \doteq  U\inst{k}(\omega) $, and $y_k \doteq  Y\inst{k}(\omega) $, respectively.
Henceforth, we suppose that the system matrices $A$, $B$, $C$, $D$, $E$ as well as the future disturbance realizations $w_i, i\geq k$ are unknown; while the input realizations $u_k$ and  the output realizations $y_k$ are assumed to be known/measured. Moreover, we assume that for any minimal state realization of (1) the pair $(A,[B~E])$  is controllable and the pair $(A,C)$ is observable. 

\subsection{Primer on Data-Driven System Representation}
Given a specific initial condition $\xini = \Xini(\omega)$ and a sequence of disturbance realizations $w\inst{k}$ for $k \in \N$, the stochastic system~\eqref{eq:RVdynamics} induces the realization dynamics
\begin{subequations}\label{eq:RelizationDynamics}
\begin{align}
	x\inst{k+1} &= Ax\inst{k} +Bu\inst{k}+ Ew\inst{k}, \quad  
	x\inst{0}=\xini,\\
y\inst{k} &= Cx\inst{k} +Du\inst{k}.
\end{align}
\end{subequations}
We remark that for fixed input disturbance sequences $u_k, w\inst{k}$ for $k \in \N$ and a specific initial condition $\xini$, the realization dynamics~\eqref{eq:RelizationDynamics} are \textit{deterministic}.
Hence, the input and output trajectories of this LTI system can be represented using data. 
\begin{definition}[Persistency of excitation \cite{Willems2005}] Let $T, t \in \set{N}^+$. A sequence of inputs $\trar{u}{T-1}$ is said to be persistently exciting of order $t$ if the Hankel matrix
\begin{equation*}
\Hankel_t(\trar{u}{T-1}) \doteq \begin{bmatrix}
u\inst 0  & u\inst 1 &\cdots& u\inst{T-t} \\
u\inst 1  & u\inst 2 &\cdots& u\inst{T-t+1} \\
\vdots& \vdots & \ddots & \vdots \\
u\inst{t-1}&u\inst{t}& \cdots  & u\inst{T-1} \\
\end{bmatrix}
\end{equation*}
is of full row rank.  \End
\end{definition}

Since \eqref{eq:RelizationDynamics} is driven by the inputs and the realizations of the process disturbance, the extension of the fundamental lemma by Willems et al. to the exogenous input data $\trar{(u,w)}{T-1}$ is immediate.
\begin{lemma}[Deterministic fundamental lemma]\label{Lem:fundamental} ~ \\Let $T$, $t \in \set{N}^+$. Consider a realization trajectory $\trar{(u,w,y)}{T-1}$
of \eqref{eq:RelizationDynamics}. Assume that the pair $(A, [B~E])$ is controllable.
If $\trar{(u,w)}{T-1}$ is  persistently exciting of order $\dimx+t$, then 
$\trar{(\tilde{u},\tilde{w},\tilde{y})}{t-1}$ 
is a realization trajectory of \eqref{eq:RelizationDynamics} if and only if there exists a $g\in \R^{T-t+1} $ such that
\begin{equation} \label{eq:realization_funda}
\Hankel_t(\trar{z}{T-1})g= \trar{\tilde{z}}{t-1}\, \text{ holds }  \forall\mbf{z}\in \{\mbf{u},\mbf{w},\mbf{y}\}.
\end{equation} 
\End
\end{lemma}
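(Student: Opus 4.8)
The plan is to reduce Lemma~\ref{Lem:fundamental} to the classical fundamental lemma of Willems et al.\ \cite{Willems2005} by treating the pair $(u,w)$ as an augmented input to a modified LTI system. Specifically, I would rewrite \eqref{eq:RelizationDynamics} as $x_{k+1} = A x_k + \begin{bmatrix} B & I_{n_x} \end{bmatrix} \begin{bmatrix} u_k \\ w_k \end{bmatrix}$, which is a deterministic LTI system with state $x_k \in \R^{n_x}$ and input $\begin{bmatrix} u_k^\top & w_k^\top \end{bmatrix}^\top \in \R^{n_u + n_x}$. The key structural observation is that controllability of the pair $\left(A, \begin{bmatrix} B & I_{n_x}\end{bmatrix}\right)$ holds trivially, since the presence of the identity block $I_{n_x}$ already makes the controllability matrix full rank. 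Hence the hypotheses of the original fundamental lemma are met for this augmented system as soon as $\trar{(u,w)}{T-1}$ is persistently exciting of order $n_x + t$, which is exactly the assumption of the lemma.

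With this reduction in place, the classical fundamental lemma directly yields: $\trar{(\tilde x, \tilde u, \tilde w)}{t-1}$ is a length-$t$ behavior of the augmented system if and only if there exists $g \in \R^{T-t+1}$ such that the stacked Hankel matrix $\begin{bmatrix} \Henkel_t(\trar{u}{T-1}) \\ \Henkel_t(\trar{w}{T-1}) \\ \Henkel_t(\trar{x}{T-1})\end{bmatrix} g = \begin{bmatrix} \trar{\tilde u}{t-1} \\ \trar{\tilde w}{t-1} \\ \trar{\tilde x}{t-1}\end{bmatrix}$, which is precisely \eqref{eq:realization_funda} written out componentwise for $\seq{z} \in \{\seq{x}, \seq{u}, \seq{w}\}$. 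The equivalence between ``behavior of the augmented system'' and ``behavior of \eqref{eq:RelizationDynamics}'' is immediate because the two notions coincide by construction: a trajectory $(\tilde x, \tilde u, \tilde w)$ satisfies the augmented recursion exactly when it satisfies \eqref{eq:RelizationDynamics}.

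The step requiring the most care is justifying the applicability of Willems et al.'s lemma in the precise form needed here: the original statement is typically phrased for input-output trajectories of a controllable system, and I would need to invoke the version that returns the full state-input trajectory (or equivalently note that one may take the state itself as the output, $y_k = x_k$, so that the behavior consists of all of $x$, $u$, and $w$). I would also remark that persistency of excitation of order $n_x + t$ for $(u,w)$ is what guarantees, via the Hankel rank condition combined with controllability, that the map $g \mapsto \trar{(\tilde x, \tilde u, \tilde w)}{t-1}$ is surjective onto the length-$t$ behavior; this is the nontrivial ``only if'' direction. The ``if'' direction is straightforward: any $g$ produces, through the linearity of the dynamics and the fact that each Hankel column is itself a valid behavior, a valid length-$t$ realization behavior. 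I expect no genuine obstacle beyond bookkeeping, since the whole content is that adding a full-rank noise input channel cannot destroy controllability.
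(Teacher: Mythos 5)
Your proposal is correct and follows exactly the same route as the paper: rewrite \eqref{eq:RelizationDynamics} with the augmented input $[u_k^\top\ w_k^\top]^\top$ and matrix $[B\ I_{n_x}]$, observe that $(A,[B\ I_{n_x}])$ is controllable for any $B$, and invoke the classical fundamental lemma of Willems et al. The paper states this in two sentences; your version merely spells out the bookkeeping.
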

\begin{proof}
In \cite{Willems2005} , the fundamental lemma is originally given and proven in the behavioral framework. Its reformulation in terms of  state-space descriptions can be found in \cite[Lemma~2]{DePersis19}.  Applying \cite[Lemma~2]{DePersis19} and considering $(u,w)$ as  exogenous inputs to the system, the  assertion follows directly.  
\end{proof}
At this point, it is fair to ask  for how to obtain---or how to estimate---previous disturbance realizations $\trar{w}{T-1}$? We postpone our answer  to Section~\ref{sec:estimation}. However, even temporarily assuming exact measurements of 
$\trar{w}{T-1}$,  the \textit{future disturbance realizations} $\trar{\tilde{w}}{t-1}$ on the right-hand side of \eqref{eq:realization_funda} are not known. Even when given the future realizations of the inputs $\trar{\tilde{u}}{t-1}$ without the knowledge of $\trar{\tilde{w}}{t-1}$, 
one cannot compute the future realizations of the outputs $\trar{\tilde{y}}{t-1}$ via~\eqref{eq:realization_funda}. 
Indeed, the ambition of our subsequent discussions is twofold: (i) the development of data-driven methods to predict the future evolution of the distributions of inputs and outputs based on the knowledge of past realizations and of the distribution of the process disturbance,  and (ii) the reformulation of Problem~\ref{Problem1} in a computationally tractable data-driven form.

\section{Data-driven  Representations of Stochastic LTI Systems}\label{sec:data_representation}
\subsection{Basics of Polynomial Chaos Expansion}
Polynomial Chaos Expansion (PCE) enables to propagate uncertainties through system dynamics and thus it provides an alternative to describe future random variables rather than accessing their realizations. Its origins date back to Norbert Wiener~\cite{wiener38homogeneous}; for a general introduction to PCE  see \cite{sullivan15introduction}. 

The core idea of PCE is that an $\mathcal{L}^2$ random variable can be expressed in a suitable polynomial basis. To this end, we consider an orthogonal polynomial basis $\{\phi^j \relx\}_{j=0}^\infty$ which spans $\spl$, i.e.,
\[
	\langle \phi^i, \phi^j \rangle \doteq \int_{\Omega} \phi^i(\omega)\phi^j(\omega) \diff \mu(\omega) = \delta^{ij} \langle \phi^j\rangle^2,
\]
where $\delta^{ij}$ is the Kronecker delta and $\langle \phi^j\rangle^2 \doteq \langle \phi^j,\phi^j\rangle$. 

\begin{definition}[Polynomial chaos expansion]
The PCE of a real-valued random variable $Z\in \spl$ with respect to the basis  $\{\phi^j \relx\}_{j=0}^\infty$ is
\begin{equation*}\label{eq:PCE_def}
Z = \sum_{j=0}^{\infty}\pcecoe{z}{j} \phi^j  \quad\text{with}\quad \pcecoe{z}{j} = \frac{\langle Z, \phi^j \rangle}{\langle \phi^j\rangle^2 },
\end{equation*}
where $\pcecoe{z}{j} \in \R$ is called the $j$-th PCE coefficient. \End
\end{definition}
We remark that by applying PCE component-wise the $j$-th PCE coefficient of a vector-valued  random variable $Z\in\splx{\dimz}$ reads
\begin{equation*}
	\pce{z}^j = \begin{bmatrix} \pcecoe{z}{1,j} & \pcecoe{z}{2,j} & \cdots & \pcecoe{z}{\dimz,j} \end{bmatrix}^\top,
\end{equation*}
where $\pcecoe{z}{i,j}$ is the $j$-th PCE coefficient of component $Z^i$. 

In numerical implementations, the series have to be terminated after a finite number of terms which may lead to truncation errors.
For details on truncation errors and error propagation see~\cite{field04accuracy,muehlpfordt18comments}. Indeed, random variables that follow some widely used distributions admit exact finite-dimensional PCEs in suitable polynomial bases, e.g., for Gaussian random variables the Hermite polynomials are chosen. 
\begin{definition}[Exact PCE representation] \label{def:exact_pce}
		A random variable $Z\in \splx{\dimz}$ is said to admit an exact PCE with $L$ terms if 
		\[
	 Z- \sum_{j=0}^{L-1} \pcecoe{z}{j} \phi^j  =0. \tag*{\End}
		\]
\end{definition}
We refer to \cite{koekoek96askey, xiu02wiener} for details.
\begin{remark}[Appropriate bases for exact PCE] 
Given an $\mcl{L}^2$ random variable with known distribution, the key to construct an exact finite-dimensional PCE is the appropriate choice of basis functions. For some widely used distributions, the appropriate choice of polynomial bases is summarized in Table~\ref{tab:askey_scheme}. Notice that one uses specific random-variable  arguments $\xi \in \splx{n_\xi}$ for different polynomial basis, cf. Table~\ref{tab:askey_scheme}. On the other hand, for random variables not listed in Table~\ref{tab:askey_scheme}
a trivial (non-orthogonal)
 basis choice is $\phi^0 = 1$ and $\phi^1 = Z$ which implies the exact and finite PCE $\mathsf z^0 = 0, \mathsf z^1 = 1$. In the framework of $\mcl{L}^2$ random variables one could also employ a Gram-Schmidt process to construct appropriate orthogonal basis functions~\cite{Witteveen2006}. \vspace*{1.5mm} \End
\end{remark}

Given  $Z, \widetilde{Z}\in \splx{\dimz}$ admitting exact PCEs of $L$ terms, cf. Definition \ref{def:exact_pce}, 
the expectation $\mean[Z]\in \R^{\dimz}$ , the variance $\mbb{V}[Z]\in \R^{\dimz}$, and the covariance $\Sigma [Z,\widetilde{Z}]\in \R^{\dimz\times \dimz}$ can be obtained from the PCE coefficients as $\mean [Z] = \pce{z}^0$,
	\begin{equation} \label{eq:MomentsPCE}
		\mbb V [Z] = \sum_{j=1}^{L-1} \big(\pce{z}^j\big)^2 \langle \phi^j\rangle^2,~ \Sigma[Z,\widetilde{Z}] = \sum_{j=1}^{L-1} \pce{z}^j\tilde{\pce{z}}^{j\top}\langle \phi^j\rangle^2,
	\end{equation}
where $(\pce{z}^j)^2 \doteq \pce{z}^j \circ  \pce{z}^j $ refers to the Hadamard product.
We refer to~\cite{Lefebvre20} for a detailed discussion.
\begin{table}[t] 
\caption{Correspondence of random variables and underlying orthogonal polynomials.}
\label{tab:askey_scheme}
\centering
		\begin{adjustbox}{width=\columnwidth,center}
\begin{tabular}{ccc c}
	\toprule
	Distribution  & Support & Orthogonal basis $\{\phi^j\}_{j=0}^{\infty}$ & Argument $\xi \relx$ \\
	\midrule
	Gaussian & $(-\infty, \infty)$ & Hermite & $\mathcal N(0,1)$ \\
	Uniform  & $[a,b]$ & Legendre &$\mathcal U ([-1,1])$ \\
	Beta & $[a,b]$ & Jacobi & $\mathcal{B}(\alpha,\beta,[-1,1])$ \\
	Gamma & $(0,\infty)$ & Laguerre & $\Gamma(\alpha,\beta,(0,\infty))$ \\
	\bottomrule
\end{tabular}
		\end{adjustbox}
\end{table}

\subsection{Fundamental Lemma for Stochastic LTI Systems}
Replacing all random variables of \eqref{eq:RVdynamics} with their PCE expansions with respect to the basis $\{\phi^j\relx\}_{j=0}^\infty$ and performing a so-called Galerkin projection onto the basis functions $\phi^j\relx$---we refer to Appendix~A for details---, we obtain the dynamics of the PCE coefficients. 

For all $j \in \N \cup \{\infty\}$, with  given
$\xinipce^j$ and $\pce{w}^{j}_k$, $k \in \N$, the dynamics of the PCE coefficients read
\begin{subequations}\label{eq:PCEcoesDynamics}
\begin{align}
\pcecoe{x}{j}\inst{k+1} &= A\pcecoe{x}{j}\inst{k} +B\pcecoe{u}{j}\inst{k}+ E \pcecoe{w}{j}\inst{k},\quad
\pcecoe{x}{j}\inst{0} = \xinipce^j,
\\
\pcecoe{y}{j}\inst{k}& = C\pcecoe{x}{j}\inst{k} +D\pcecoe{u}{j}\inst{k}.
\end{align}
\end{subequations}

Notice that the PCE coefficients of the initial state and process disturbances are determined with respect to their known distributions, and thus the PCE coefficient dynamics~\eqref{eq:PCEcoesDynamics} are \textit{deterministic}.
Therefore, they admit the conceptual application of the usual LTI fundamental lemma. 
\begin{lemma}[Fundamental lemma for PCE coefficients]\label{lem:PCEfundamental} ~\\
Let $T, t \in \set{N}^+$. For $j \in \N \cup \{\infty\}$, consider a  stacked PCE coefficient trajectory $\trar{\pce{(u, w, y)}}{T-1}^j$  of~\eqref{eq:PCEcoesDynamics}. Suppose that the pair $(A,[B~E])$ is controllable. For all $j \in \N \cup \{\infty\}$, let $\trar{(\pce{u},\pce{w})}{T-1}^j$ be persistently exciting of order $\dimx +t$. Then $\trar{ ( \tilde{\pce u}, \tilde{\pce w}, \tilde{\pce y})}{t-1}^j$ 
is a PCE coefficient trajectory of \eqref{eq:PCEcoesDynamics} if and only if there exists $\pcecoe{g}{j}\in \R^{T-t+1} $ such that
\begin{equation} \label{eq:PCEfunda}
\Hankel_t(\trar{\pce{z}}{T-1}^j) \pcecoe{g}{j}= \trar{ \tilde{\pce z}}{t-1}^j \,  \text{ holds  } \forall\pce{z}\in \{\pce{u},\pce{w},\pce{y}\},
\end{equation} 
and all $  j \in \N \cup \{\infty\}$.
 \End
\end{lemma}
The proof follows from Lemma~\ref{Lem:fundamental} and is thus omitted.

Lemma \ref{lem:PCEfundamental} as such is straightforward, but it is not trivial to measure or estimate PCE coefficients of a stochastic LTI system. Hence the previous result is seemingly not very practical.  However, as we show below the structural similarity of the PCE coefficient dynamics~\eqref{eq:PCEcoesDynamics} and the original stochastic system~\eqref{eq:RVdynamics} enables further useful insights. 

Consider the stochastic LTI system \eqref{eq:RVdynamics} and the corresponding trajectories of random variables, PCE coefficients driven by \eqref{eq:PCEcoesDynamics}, and realizations generated by \eqref{eq:RelizationDynamics}, which are $\trar{(U,W,Y)}{T-1}$, $\trar{\pce{(u,w,y)}}{T-1}^j, \, j \in \N \cup \{\infty\}$, and $\trar{(u,w,y)}{T-1}$,  respectively. We have the following pivotal result.

\begin{lemma}[Column-space equivalence]\label{lem:colEqui} 
Consider the stochastic LTI system \eqref{eq:RVdynamics} and its $\mcl L^2(\Omega, \mcl F,\mu;\R^{\dimz})$, $\dimz \in\{\dimu, \dimw,\dimy\}$ random-variable trajectories $\trar{(U,W,Y)}{T-1}$.  Suppose that the pair $(A,[B~E])$ is controllable. 

Let the corresponding PCE coefficient trajectories $\trar{\pce{(u, w)}}{T-1}^j$, $j \in \N \cup \{\infty\}$ and the realizations  $\trar{(u,w)}{T-1}$ be persistently exciting of order $\dimx +t$.
\begin{itemize}
\item[1)] Then, for all $j \in \N \cup \{\infty\}$ and all $(\pce{z},\mbf{z}) \in \{(\pce{u},\mbf{u}),(\pce{w},\mbf{w}),(\pce{y},\mbf{y})\} $,
 it holds that
\begin{subequations}
\begin{equation} \label{eq:colEqui}
\mathrm{colsp}\big(\Hankel_t(\trar{\pce{z}}{T-1}^j) \big) = \mathrm{colsp}\big(\Hankel_t(\trar{z}{T-1} )\big).
\end{equation}
\item[2)] Moreover, for all $g \in \R^{T-t+1}$, there exists $G \in  \splx{T-t+1} $ such that
\begin{equation}\label{eq:colEquiRV}
\Hankel_t(\trar{Z}{T-1}) g = \Hankel_t(\trar{z}{T-1}) G 
\end{equation}
\end{subequations}
holds for all $(\mbf{Z},\mbf{z}) \in \{(\mbf{U},\mbf{u}),(\mbf{W},\mbf{w}),(\mbf{Y},\mbf{y})\} $.
\End
\end{itemize}
\end{lemma}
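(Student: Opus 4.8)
The plan is to exploit one observation: the PCE-coefficient dynamics~\eqref{eq:PCEcoesDynamics} and the realization dynamics~\eqref{eq:RelizationDynamics} are the \emph{same} deterministic LTI system — both read $\xi_{k+1}=A\xi_k+B\eta_k+\nu_k$ with the disturbance acting as an exogenous input weighted by $I_{\dimx}$ — and therefore share one and the same set $\mcl{B}_t\subset\R^{(2\dimx+\dimu)t}$ of length-$t$ behaviors, a subspace fixed by $A,B$ alone. For brevity I write $\mbf{H}\doteq\Henkel_t(\trar{(x,u,w)}{T-1})$, $\mbf{H}^j\doteq\Henkel_t(\trar{\pce{(x,u,w)}}{T-1}^j)$ and $\mbf{H}_{\mathrm{rv}}\doteq\Henkel_t(\trar{(X,U,W)}{T-1})$ for the Hankel matrices of the respective stacked signals; by construction $\mbf{H}$ is obtained by stacking $\Henkel_t(\trar{x}{T-1})$, $\Henkel_t(\trar{u}{T-1})$ and $\Henkel_t(\trar{w}{T-1})$ vertically, and likewise for $\mbf{H}^j$ and $\mbf{H}_{\mathrm{rv}}$.

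For part~(i) I would invoke persistency of excitation (of order $\dimx+t$): it lets Lemma~\ref{lem:PCEfundamental} identify $\mathrm{colsp}(\mbf{H}^j)$ with $\mcl{B}_t$ and lets Lemma~\ref{Lem:fundamental} identify $\mathrm{colsp}(\mbf{H})$ with $\mcl{B}_t$ as well, so $\mathrm{colsp}(\mbf{H}^j)=\mathrm{colsp}(\mbf{H})$. Since $\Henkel_t(\trar{\pce z}{T-1}^j)$ and $\Henkel_t(\trar{z}{T-1})$ are the sub-blocks of rows of $\mbf{H}^j$ and $\mbf{H}$ that carry the $\pce z$- resp.\ $z$-component, projecting the (now equal) column spaces onto those coordinates gives $\mathrm{colsp}(\Henkel_t(\trar{\pce z}{T-1}^j))=\mathrm{colsp}(\Henkel_t(\trar{z}{T-1}))$ for each pair $(\mbf{z},\pce z)\in\{(\mbf{x},\pce x),(\mbf{u},\pce u),(\mbf{w},\pce w)\}$, which is~\eqref{eq:colEqui}.

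For part~(ii) I would fix $g\in\R^{T-t+1}$ and set $\tilde Z\doteq\mbf{H}_{\mathrm{rv}}\,g\in\splx{(2\dimx+\dimu)t}$. Because PCE acts linearly and each column of $\mbf{H}_{\mathrm{rv}}$ is a time window of $\trar{(X,U,W)}{T-1}$ whose $j$-th PCE coefficient is the matching window of $\pcecoe{(x,u,w)}{j}$, the $j$-th PCE coefficient of $\tilde Z$ equals $\mbf{H}^j g$. By part~(i) we have $\mbf{H}^j g\in\mathrm{colsp}(\mbf{H}^j)=\mathrm{colsp}(\mbf{H})$, so $\mbf{H}\,\pcecoe{g}{j}=\mbf{H}^j g$ is solvable; I would take $\pcecoe{g}{j}$ to be its minimum-Euclidean-norm solution, which depends linearly and boundedly on the right-hand side with $j$-independent bound $c\doteq\|\mbf{H}^{\rinv}\|$, the spectral norm of the Moore--Penrose pseudoinverse of $\mbf{H}$. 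Then $G\doteq\sum_{j\in\N}\pcecoe{g}{j}\phi^j$ (its well-definedness discussed below), combined with linearity and continuity of the fixed map $\mbf{H}$, yields $\mbf{H}\,G=\sum_j(\mbf{H}\,\pcecoe{g}{j})\phi^j=\sum_j(\mbf{H}^j g)\phi^j=\tilde Z=\mbf{H}_{\mathrm{rv}}\,g$; componentwise this reads $\Henkel_t(\trar{z}{T-1})\,G=\Henkel_t(\trar{Z}{T-1})\,g$ for every pair $(\mbf{z},Z)\in\{(\mbf{x},X),(\mbf{u},U),(\mbf{w},W)\}$, i.e.~\eqref{eq:colEquiRV}.

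The step I expect to be the main obstacle is confirming $G\in\splx{T-t+1}$, i.e.\ that the series $\sum_j\pcecoe{g}{j}\phi^j$ converges in $\mcl{L}^2$; everything else is bookkeeping with the fundamental lemma and linearity of PCE. This is where the $j$-independent bound $c$ is essential: from $\|\pcecoe{g}{j}\|\le c\,\|\mbf{H}^j g\|$ together with the Parseval identity $\sum_j\|\mbf{H}^j g\|^2\langle\phi^j,\phi^j\rangle=\|\tilde Z\|_{\mcl{L}^2}^2$ — finite since $\tilde Z$ is a finite linear combination of windows of the $\mcl{L}^2$ random variables $X_k,U_k,W_k$ — one gets $\sum_j\|\pcecoe{g}{j}\|^2\langle\phi^j,\phi^j\rangle\le c^2\|\tilde Z\|_{\mcl{L}^2}^2<\infty$, so $G$ is indeed an element of $\splx{T-t+1}$. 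If $\trar{(X,U,W)}{T-1}$ admits an exact finite PCE in the sense of Definition~\ref{def:exact_pce}, this sum is finite and the point is immediate.
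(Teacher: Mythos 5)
Your proposal is correct and follows essentially the same route as the paper: part (i) rests on the observation that the realization dynamics and the PCE-coefficient dynamics share the system matrices $(A,[B\ I_{\dimx}])$ so both Hankel column spaces equal the length-$t$ behavior set, and part (ii) expands $\Henkel_t(\trar{Z}{T-1})g$ in the PCE basis, applies the column-space equivalence termwise to get coefficients $\pcecoe{g}{j}$, and resums them into $G$. The only difference is that you additionally verify $G\in\splx{T-t+1}$ via the uniform pseudoinverse bound and Parseval (and make explicit that one $\pcecoe{g}{j}$ must serve all three components simultaneously, hence the stacked Hankel matrix); the paper leaves both points implicit, so your version is a slightly more careful rendering of the same argument.
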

\begin{proof}
For the sake of readability, we omit the subscript $(\cdot)_{[0,T-1]}$ in the proof.
The proof of \eqref{eq:colEqui} in Part 1) follows directly from the  observation that the realization dynamics~\eqref{eq:RelizationDynamics} and the PCE coefficient dynamics~\eqref{eq:PCEcoesDynamics} share the same system matrices $(A,B,C,D,E)$.

\textit{Part 2):} Considering \eqref{eq:colEquiRV}, we have
\begin{align*}
 \Hankel_t(\mbf{Z}) g &= \Hankel_t\left(\sum_{j=0}^{\infty}  \pce{z}^j\phi^j\right) g\\&= \sum_{j=0}^{\infty}\Hankel_t( \pce{z}^j\phi^j)g= \sum_{j=0}^{\infty} \phi^j \Hankel_t( \pce{z}^j)g.
\end{align*}
Note that $\Hankel(\cdot)$ and the summation are both linear operations, therefore the second equality holds. Moreover,  the basis function $\phi^j(\xi)$ is a scalar polynomial of its random-variable argument $\xi \in \splx{n_\xi}$. Hence we have $\phi^j(\xi) \in \spl$. In case the different components of $\mbf{Z}$ require different bases, one relies on the union of the bases for each component.  
Then, using the column space equivalence \eqref{eq:colEqui}, for all $j \in  \N \cup \{\infty\}$ and any $g \in \R^{T-t+1}$, we can find $\pcecoe{g}{j}\in \R^{T-t+1}$, such that
$
\Hankel_t(\pce{z}^j) g = \Hankel_t(\mbf{z} )\pcecoe{g}{j}$.
This leads to
\begin{align*}
\Hankel_t(\mbf{Z}) g &= \sum_{j=0}^{\infty} \phi^j \Hankel_t( \pce{z}^j)g 
= \Hankel_t(\mbf{z} )\sum_{j=0}^{\infty}\phi^j \pcecoe{g}{j}.
\end{align*}
The assertion follows with $G \doteq \sum_{j=0}^{\infty}\pcecoe{g}{j} \phi^j $.
\end{proof}

	The next result relaxes Part 1) of Lemma~\ref{lem:colEqui} with respect to the persistency of excitation of the PCE coefficient trajectories. 
	\begin{corollary}[Column-space inclusion] \label{cor:inclusion}
	Consider the stochastic LTI system \eqref{eq:RVdynamics} and its $\mcl L^2(\Omega, \mcl F,\mu;\R^{\dimz})$,
	$\dimz \in\{ \dimu, \dimw,\dimy\}$ random-variable trajectories $\trar{(U,W,Y)}{T-1}$.	Let the corresponding realizations  $\trar{(u,w)}{T-1}$ be persistently exciting of order $\dimx +t$. Suppose that the pair $(A,[B~E])$ is controllable. 
	
Then, for all $j \in \N \cup \{\infty\}$ and all $(\pce{z},\mbf{z}) \in \{(\pce{u},\mbf{u}),(\pce{w},\mbf{w}),(\pce{y},\mbf{y})\} $,
		it holds that
			\begin{equation*} \label{eq:colInclu}
				\mathrm{colsp}\big(\Hankel_t(\trar{\pce{z}}{T-1}^j) \big) \subseteq \mathrm{colsp}\big(\Hankel_t(\trar{z}{T-1} )\big). \tag*{\End}
\vspace*{1mm}			\end{equation*}
\end{corollary}

	Based on the two results above, we obtain the following data-driven representation  of the
	PCE coefficient dynamics~\eqref{eq:PCEcoesDynamics}  that uses only the knowledge of realization data.
	
\begin{corollary}[PCE coefficients via realizations]\label{cor:mixed_funda}
	Suppose that the conditions of Corollary~\ref{cor:inclusion} hold.
If the realizations $\trar{(u,w)}{T-1}$ are persistently exciting of order $\dimx +t$, then, for all  $j \in \N \cup \{\infty\}$, $\trar{ ( \tilde{\pce u}, \tilde{\pce w},\tilde{\pce y})}{t-1}^j $ is an input-output-disturbance PCE coefficient trajectory of \eqref{eq:PCEcoesDynamics} 
if and only if there exists $\pcecoe{g}{j}\in \R^{T-t+1}$ such that 
\begin{equation} \label{eq:mixed_funda}
\Hankel_t(\trar{z}{T-1}) \pcecoe{g}{j}= \trar{ \tilde{\pce z}}{t-1}^j
\end{equation}
holds for all $(\mbf{z}, \tilde{\pce{z}})\in \{ (\mbf{u},\tilde{\pce{u}}), (\mbf{w},\tilde{\pce{w}}),(\mbf{y},\tilde{\pce{y}})\} $.
 \End
\end{corollary}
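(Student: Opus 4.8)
The plan is to reduce the claim to the deterministic fundamental lemma, Lemma~\ref{Lem:fundamental}, by exploiting that the Galerkin-projected dynamics \eqref{eq:PCEcoesDynamics} constitute \emph{the same} deterministic LTI system as the realization dynamics \eqref{eq:RelizationDynamics}. Concretely, for each fixed $j \in \N$, the PCE-coefficient recursion \eqref{eq:PCEcoesDynamics} is the very same difference equation in $\R^{\dimx}$ as \eqref{eq:RelizationDynamics} --- both are governed by the matrices $(A,[B~I_{\dimx}])$ --- with only the interpretation of the signals differing. Hence the set of length-$t$ solution triples coincides for the two systems, so $\trar{(\tilde{\pce{x}},\tilde{\pce{u}},\tilde{\pce{w}})}{t-1}^j$ is a PCE-coefficient behavior of \eqref{eq:PCEcoesDynamics} if and only if, regarded merely as a deterministic state/input/noise triple, it is a realization behavior of \eqref{eq:RelizationDynamics}. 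The recorded data $\trar{(x,u,w)}{T-1}$ is one such realization behavior and, by hypothesis, $\trar{(u,w)}{T-1}$ is persistently exciting of order $\dimx + t$.

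Next I would invoke Lemma~\ref{Lem:fundamental} with $\trar{(x,u,w)}{T-1}$ as the data: a deterministic triple $\trar{(\tilde{x},\tilde{u},\tilde{w})}{t-1}$ is a realization behavior of \eqref{eq:RelizationDynamics} if and only if there is a \emph{single} $g \in \R^{T-t+1}$ satisfying $\Henkel_t(\trar{z}{T-1})\,g = \trar{\tilde{z}}{t-1}$ simultaneously for $\mbf{z} \in \{\mbf{x},\mbf{u},\mbf{w}\}$. Specializing the generic triple to $\trar{(\tilde{\pce{x}},\tilde{\pce{u}},\tilde{\pce{w}})}{t-1}^j$, combining with the behavior-set identity from the previous step, and renaming $g =: \pcecoe{g}{j}$ yields exactly \eqref{eq:mixed_funda}, for every $j \in \N$; both directions of the ``if and only if'' are covered by the corresponding equivalence in Lemma~\ref{Lem:fundamental}.

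The only point that warrants a line of care --- and the reason I would \emph{not} attempt to prove the corollary by chaining Lemma~\ref{lem:PCEfundamental} with the column-space equivalence of Lemma~\ref{lem:colEqui} --- is that a persistently exciting \emph{realization} trajectory need not induce persistently exciting PCE-coefficient data for every index $j$; for instance, if a component of $(u,w)$ is deterministic, all of its higher-order coefficients, and hence the corresponding coefficient Hankel matrices, vanish identically. Both Lemma~\ref{lem:PCEfundamental} and Lemma~\ref{lem:colEqui} presuppose excitation of the coefficient data and therefore do not directly deliver the ``for all $j \in \N$'' claim, whereas routing the argument through Lemma~\ref{Lem:fundamental}, where the rank condition falls only on the realization data $\trar{(x,u,w)}{T-1}$, avoids the issue completely. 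Accordingly, I expect the main --- and essentially only --- obstacle to be phrasing this reduction cleanly rather than any genuine technical hurdle.
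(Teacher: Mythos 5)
Your proof is correct and follows essentially the route the paper intends: the corollary rests on the observation---made explicitly in the paper's proof of Lemma~\ref{lem:colEqui}(i)---that the realization dynamics \eqref{eq:RelizationDynamics} and the coefficient dynamics \eqref{eq:PCEcoesDynamics} are the same deterministic LTI system with matrices $(A,[B~I_{\dimx}])$, so Lemma~\ref{Lem:fundamental} applied to the persistently exciting realization data characterizes all coefficient behaviors. Your added caveat that chaining Lemma~\ref{lem:PCEfundamental} with Lemma~\ref{lem:colEqui} would require persistency of excitation of the coefficient trajectories---which the corollary does not assume---is a valid and worthwhile observation.
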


Observe that the core difference between \eqref{eq:realization_funda} and  \eqref{eq:mixed_funda}  is that in the latter  the PCE coefficients $\tilde{\pce{w}}$ of future process disturbances are given, i.e., they are known, while~\eqref{eq:realization_funda} requires knowledge of future disturbance realizations. Hence, \eqref{eq:mixed_funda} allows to predict the future output distributions of stochastic LTI systems based on knowledge of the distributions of future inputs. Moreover, one can lift the results to the corresponding $\mcl{L}^2$ probability space. 

\begin{lemma}[Stochastic fundamental lemma]\label{lem:RVfundamental} ~\\
Consider the stochastic LTI system \eqref{eq:RVdynamics} and its $\mcl L^2(\Omega, \mcl F,\mu;\R^{\dimz})$, 
$\dimz \in\{ \dimu, \dimw, \dimy\}$ trajectories of 
 random variables, the corresponding PCE coefficient trajectories from \eqref{eq:PCEcoesDynamics}, and the corresponding realization trajectories from \eqref{eq:RelizationDynamics}, which are $\trar{(U,W,Y)}{T-1}$, $\trar{\pce{(u,w,y)}}{T-1}^j, \, j \in \N \cup \{\infty\}$, and $\trar{(u,w,y)}{T-1}$, respectively. Suppose that the pair $(A,[B~E])$ is controllable. 
\begin{subequations}
\begin{itemize}
\item[1)]  Let $\trar{(u,w)}{T-1}$ be persistently exciting of order $\dimx +t$. Then
$ (\widetilde{\mbf{U}}, \widetilde{\mbf{W}},\widetilde{\mbf{Y}})_{[0,t-1]}$ is a trajectory of \eqref{eq:RVdynamics} if and only if there exists $G \in \splx{T-t+1} $ such that 
\begin{equation} \label{eq:RVfunda}
\Hankel_t(\trar{z}{T-1}) G=\widetilde{\mbf{Z}}_{[0,t-1]}
\end{equation} 
 holds for all $(\mbf{z}, \widetilde{\mbf{Z}})\in \{(\mbf{u},\widetilde{\mbf{U}}), (\mbf{w}, \widetilde{\mbf{W}}),(\mbf{y},\widetilde{\mbf{Y}})\} $. 
\item[2)]   
Let $\trar{(\mbf{U},\mbf{W})}{T-1}$ satisfy
\[
\trar{\mbf{Z}}{ T-1}= \sum_{j=0}^{L-1} \pcecoe{z}{j} \phi^j, \quad \mbf{Z} \in \{\mbf{U}, \mbf{W}\}
\]
with $L \in \mbb{N}^+$ and all PCE trajectories $\trar{(\pce{u},\pce{w})}{T-1}^j$ with $j \in \{0,\dots, {L-1}\}$ are persistently exciting of order $\dimx +t$. If there exists a $g\in \R^{T-t+1} $ such that
\begin{equation} \label{eq:RVfunda2}
\Hankel_t(\trar{Z}{T-1}) g= \widetilde{\mbf{Z}}_{[0,t-1]}
\end{equation} 
 holds for all $\mbf{Z} \in \{\mbf{U},\mbf{W},\mbf{Y}\} $, 
then $ (\widetilde{\mbf{U}}, \widetilde{\mbf{W}},\widetilde{\mbf{Y}} )_{[0,t-1]}$ is a trajectory of \eqref{eq:RVdynamics}. 
\End
\end{itemize}
\end{subequations}
\end{lemma}
\begin{proof}
\textit{Part 1):}	 We begin with showing that  $G \mapsto \tilde{\mbf{Z}}$. Consider $G \in \splx{T-t+1} $ which can be written as
$G = \sum_{j=0}^{\infty}\pcecoe{g}{j} \phi^j$.
The PCE coefficients $\pcecoe{g}{j} \in \R^{T-t+1}$ determine PCE coefficient trajectories of \eqref{eq:PCEcoesDynamics}, cf. Corollary~\ref{cor:mixed_funda}.
Multiplying \eqref{eq:mixed_funda} with the polynomial basis $\phi^j$ gives
\begin{equation*}
\Hankel_t(\trar{z}{T-1})\pcecoe{g}{j}\phi^j = \trar{ \tilde{\pce z}}{t-1}^j\phi^j, \quad \forall j \in \N \cup \{\infty\}.
\end{equation*}
We obtain 
\begin{equation*}
\Hankel_t(\trar{z}{T-1}) G = \Hankel_t(\trar{z}{T-1})\sum_{j=0}^\infty \pcecoe{g}{j}\phi^j =\sum_{j=0}^\infty  \trar{ \tilde{\pce z}}{t-1}^j\phi^j.
\end{equation*}
Hence $G$ determines a random variable trajectory  $\widetilde{\mbf{Z}}_{[0,t-1]} = \sum_{j=0}^\infty \trar{ \tilde{\pce z}}{t-1}^j\phi^j $ of \eqref{eq:RVdynamics}. 

Next we show $\widetilde{\mbf{Z}} \mapsto G$.
For any $\widetilde{\mbf{Z}}_{[0,t-1]} \in \mcl L^2(\Omega, \mcl F,\mu;\R^{\dimz})$, $\dimz\in\{ \dimu, \dimw,\dimy\}$, i.e., for any random variable trajectory of \eqref{eq:RVdynamics}, the corresponding PCE coefficient trajectories $ \trar{ \tilde{\pce z}}{t-1}^j, j\in \mbb{N}\cup \{\infty\} $ exist. Moreover, they correspond to $\pcecoe{g}{j}\in \R^{T-t+1}, j\in \mbb{N}\cup \{\infty\}$ such that \eqref{eq:mixed_funda} holds, cf. Corollary~\ref{cor:mixed_funda}. To conclude the proof  and similarly as before, we construct $G \in \splx{T-t+1} $ via the Galerkin projection of \eqref{eq:RVfunda} for all $j\in \mbb{N}\cup \{\infty\}$. 

\textit{Part 2):}
Note that Part 2) asserts that $g \mapsto \widetilde{\mbf{Z}}_{[0,t-1]}$.
With Part 2) of Lemma~\ref{lem:colEqui}, we have that for any $g \in \R^{T-t+1}$ there exists  $G \in  \splx{T-t+1} $ such that \eqref{eq:colEquiRV} holds. Furthermore, Part 1) of Lemma~\ref{lem:RVfundamental} gives that  $G$ determines a random variable trajectory of \eqref{eq:RVdynamics} by \eqref{eq:RVfunda}. 
\end{proof}

\subsection{Discussion}
\begin{figure*}
\begin{center}
	\includegraphics[width=0.63\textwidth]{./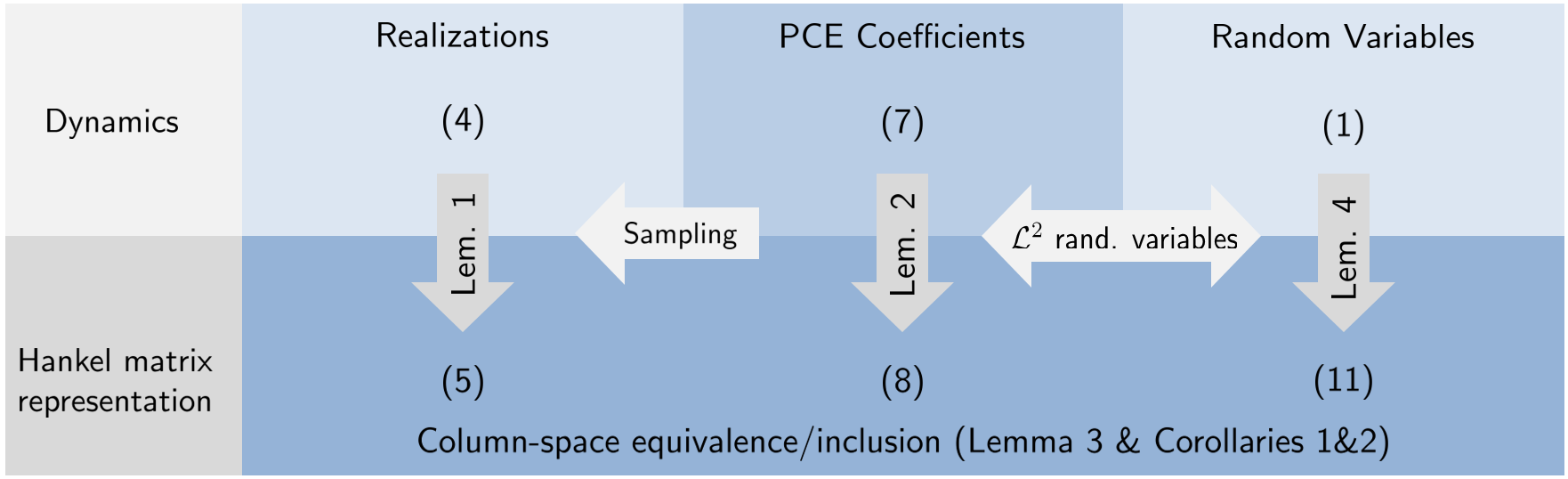}
\end{center}
\caption{Overview of the results of Section~\ref{sec:data_representation}.}\label{fig:ResSecIII}
\end{figure*}

Figure~\ref{fig:ResSecIII} summarizes the results derived above and their underlying assumptions. Indeed, Lemma~\ref{Lem:fundamental} and Lemma~\ref{lem:PCEfundamental} are immediate consequences of the original result by Willems et al.~\cite{Willems2005}. Observe that the assumed $\mcl L^2(\Omega, \mcl F,\mu;\R^{n_z})$ nature of the input and output trajectories of \eqref{eq:RVdynamics} allows to link the random variables and their PCE coefficient dynamics \eqref{eq:PCEcoesDynamics}. Moreover, note that by sampling $\phi^j(\xi(\omega))$ the PCE solutions---i.e., evaluating the basis functions $\phi^j$ for different argument realizations $\xi(\omega)$, $\xi \in \splx{n_\xi}$---one can go from PCE solutions to realizations. 

Besides these technicalities, crucial aspects are as follows:  
First,	as the estimation of PCE coefficients from data requires rather large data sets, the immediate usability of the fundamental lemma for the PCE coefficients (Lemma~\ref{lem:PCEfundamental}) appears to be limited. Yet, the equivalence and inclusion properties of column spaces obtained in Lemma~\ref{lem:colEqui}, Corollary~\ref{cor:inclusion}, and Corollary~\ref{cor:mixed_funda} provide the pivotal link, i.e., one may safely use a Hankel matrix of realizations to compute PCE coefficient  trajectories. 

The fundamental lemma is also frequently used in input-state settings~\cite{DePersis19}. We note that our preceding developments can easily transferred to the case of measured states by considering $y=x$.
Moreover, it is worth to be remarked that persistency of excitation of the $\mbf{(u,w)}$ realization trajectories is a much weaker requirement than persistency of excitation of the $(\pce {u,w})^j$ PCE coefficient trajectories. This is because in many situations one will assume that the distribution of the  disturbance is constant while the disturbance realization will usually be persistently exciting. 
Finally, the stochastic fundamental lemma (Lemma~\ref{lem:RVfundamental})---upon once more exploiting the column-space equivalence and inclusion---shows that one can extend the fundamental lemma towards the stochastic system~\eqref{eq:RVdynamics}.  

\begin{remark}[Why polynomial chaos expansions?]
At this point, it is fair to ask whether and to which extent the proposed stochastic fundamental lemma relies on polynomial chaos. Indeed, it stands to reason that despite manifold consideration of PCE in the systems and control context, stochastic systems are much more frequently approached via statistical moments, via probability densities, or via scenario approaches \cite{farina13probabilistic,Weissel2009,Schildbach2014}.
However, underlying our developments leading to the stochastic fundamental lemma is a pivotal observation: The stochastic LTI system~\eqref{eq:RVdynamics}, the realization dynamics~\eqref{eq:RelizationDynamics}, and the PCE coefficient dynamics~\eqref{eq:PCEcoesDynamics}  are subject to \textit{the same system matrices}. The reason for this structural similarity is the linear nature of PCEs. Actually, the usual description of Gaussian random variables in terms of mean value/expectation and standard deviation ($=$ square-root of the covariance) is a \textit{nonlinear parametrization} in terms of statistical moments. 
Put differently, the dynamics of the statistical moments beyond expectation are structurally different to~\eqref{eq:RVdynamics}, \eqref{eq:RelizationDynamics}, and~\eqref{eq:PCEcoesDynamics}.

Consequently, it is the structural similarity of~\eqref{eq:RVdynamics}, \eqref{eq:RelizationDynamics}, and~\eqref{eq:PCEcoesDynamics} which enables the formulation of the stochastic fundamental lemma. 
However, we also remark that a full-fledged behavioral characterization of stochastic systems~\eqref{eq:RVdynamics}, their PCE representations~\eqref{eq:PCEcoesDynamics}, and the relation between both is still an open problem. In a follow-up to the present analysis we provide first results in this direction~\cite{tudo:faulwasser22f}.  \End 
\end{remark}

Another crucial observation is as follows: The usual form of a fundamental lemma is that the columns of the Hankel matrix constructed from past system trajectories span the linear subspace of all possible trajectories of the LTI system. However, in the stochastic setting if the Hankel matrix is constructed from the random variables directly, this usual form of the lemma does not necessarily hold,  cf. Part 2) of Lemma~\ref{lem:RVfundamental}.
Specifically, note that upon assuming finite and exact PCEs, applying Galerkin projection for $L \in \mbb{N}+$ PCE basis functions to \eqref{eq:RVfunda}, and using the equivalence of $\mathrm{colsp}\big(\Hankel_t(\trar{\pce{z}}{T-1}^j) \big)$ and $\mathrm{colsp}\big(\Hankel_t(\trar{z}{T-1} )\big)$ given by \eqref{eq:colEqui}, we obtain 
\[I_L\otimes \Hankel_t(\trar{z}{T-1}) \pce g^{[0,L-1]} = \tilde{\pce z}^{[0,L-1]}\]
to compute the vector $\pce g^{[0,L-1]}$, where $\otimes$ denotes the Kronecker product, $I_L$ denotes an identity matrix of size $L$, and $\pce g^{[0,L-1]}$ stacks the $\pce g^j$ into one vector. In contrast, Galerkin projection of \eqref{eq:RVfunda2} combined with column-space equivalence gives 
\[\mbf{1}_L\otimes \Hankel_t(\trar{z}{T-1}) \pce g = \tilde{\pce z}^{[0,L-1]},\]
 where $\mbf{1}_L$ is the $L\times 1$ vector of all $1$.

 We conclude our discussion with a simple example  illustrating why Part~2) of Lemma~\ref{lem:RVfundamental} does not admit an \textit{iff} statement.

\begin{example}  
 Consider the scalar stochastic system
$X_{k+1} = X_k+ U_k$ 
with past data given by the PCEs
\begin{align*}
  X_0 =  0 \phi^0 + 0 \phi^1,\quad &   U_0= 0 \phi^0 + 1 \phi^1, \\
  X_1 =  0 \phi^0 + 1 \phi^1,\quad &   U_1= 1 \phi^0 + 0 \phi^1, \\
  X_2 =  1 \phi^0 + 1 \phi^1, \quad&   U_2= 1 \phi^0 + 1 \phi^1.
\end{align*}
Note that the PCE coefficients of $\trar{U}{2}$ satisfy the persistency of excitation required by   Part~2) of Lemma~\ref{lem:RVfundamental}. We aim to find $g$  in \eqref{eq:RVfunda2} to represent
$ \widetilde{X}_0= 0 \phi^0 + 1 \phi^1,\quad    \widetilde{U}_0= 0 \phi^0 +  1 \phi^1$.
We obtain \eqref{eq:RVfunda2} as
\begin{equation}\label{eq:RVequation}
\begin{bmatrix}
X_0& X_1& X_2 \\
U_0&U_1&U_2
\end{bmatrix}    
 g = \begin{bmatrix}
\widetilde{X}_0\\
\widetilde{U}_0
\end{bmatrix}.
\end{equation}
After applying Galerkin projection onto the basis functions and stacking the projected equations  we obtain
$    M g= c$ with
    \[
M= 
    \begin{bmatrix}
0 & 0&  1  \\
   0 &  1 & 1  \\\hline
   0  &  1 & 1   \\
   1 & 0&  1  \\
    \end{bmatrix} \quad c = \begin{bmatrix}
       0\\0\\\hline 1\\1
    \end{bmatrix}
\]
where the upper block corresponds to $\phi^0$ and the lower one to $\phi^1$.

 By the Rouché–Capelli theorem, $ M g= c$ admits a solution $g$ if and only if the block matrix $[M|c]$ has the same rank as $M$. Observe that in the example above $\mathrm{rank}(M)=3$  and $\mathrm{rank}\left(\left[M|c\right]\right) = 4$. Thus, we conclude that \eqref{eq:RVequation} does not admit solutions $g\in \mathbb{R}^4$. \End
\end{example}

\section{Data-Driven Stochastic Optimal Control}\label{sec:stochastic_OCPs}
The previous section introduces the data-driven representations for stochastic LTI systems. Precisely, the stochastic LTI system~\eqref{eq:RVdynamics} and the PCE coefficient dynamics~\eqref{eq:PCEcoesDynamics} are both linked to the realization data from~\eqref{eq:RelizationDynamics}, cf. Corollary~\ref{cor:mixed_funda} and Lemma~\ref{lem:RVfundamental}. In this section, we turn toward using our results for stochastic optimal control, i.e. the data-driven reformulation of Problem~\ref{Problem1}. Moreover, we briefly discuss the estimation of past process  disturbance realizations and propose a conceptual framework for data-driven stochastic predictive control.

\subsection{Applying PCE to the Stochastic OCP}\label{sec:PCE_OCP}
Before reformulating Problem~\ref{Problem1} in a data-driven fashion, we recall its PCE-based reformulation. 
To this end, we assume that exact PCEs for the initial condition and for the process disturbances are known.
Put differently, for known distributions of the initial condition $\Xini$ and the  disturbance $W_k$, one should (if possible) choose the bases such that their PCE representations are exact, e.g., they follow Table~\ref{tab:askey_scheme}. 
Moreover, to obtain exact PCEs for the optimal solution $\mbf{(X^\star,U^\star,Y^\star)}_{[0,N-1]}$ in Problem~\ref{Problem1} the underlying basis is constructed accordingly.

\begin{assumption}[Exact PCEs for $X_\text{ini}$ and  $W_k$]\label{ass:exact_Ini}
The initial condition $\Xini$ and all i.i.d. $W_{k}$, $k\in \I_{[0,N-1]}$ in Problem \ref{Problem1} admit exact PCEs, cf. Definition ~\ref{def:exact_pce}, with $L_{\text{ini}}$ terms and $L_w$ terms, respectively, i.e. $\Xini = \sum_{j=0}^{L_{\text{ini}}-1} \xinipce^j \phi_\text{ini}^j$ and $W_k = \sum_{j=0}^{L_w-1} \pce{w}^j_k \phi_k^j$ for $k\in \I_{[0,N-1]}.$ \End
\end{assumption}
We note that for i.i.d. $W_{k}$, $k\in \I_{[0,N-1]}$, the bases $\phi_k$ are structurally identical but the realizations of the arguments used in the bases are independent. Hence, we distinguish them by the subscript $(\cdot)_k$.

To construct a basis in which $\mbf{(X^\star,U^\star,Y^\star)}_{[0,N-1]}$ admit exact PCEs, we introduce the projection $\Pi^\mathbb{L}$ of a random variable $Z \in \splx{\dimz}$ expressed in the basis $\{\phi^j\}_{j=0}^{\infty}$ onto a reduced basis containing a subset of functions $\mathbb{L} \subseteq \N \cup \{\infty\}$ as $\Pi^\mathbb{L}:~\mcl{L}^2 \to \mcl{L}^2$
\begin{equation} \label{eq:PIprojection}
\Pi^\mathbb{L}:~Z = \sum_{j=0}^{\infty}\pcecoe{z}{j} \phi^j \mapsto \widetilde{Z}= \sum_{j\in\mbb{L}}\pcecoe{z}{j} \phi^j.
\end{equation}
\begin{proposition}[Exact uncertainty propagation via PCE]\label{pro:no_truncation_error}
	Let Assumption~\ref{ass:exact_Ini} hold. Consider the optimal solution $\mbf{(X^\star,U^\star,Y^\star)}_{[0,N-1]}$ of Problem~\ref{Problem1} for some finite horizon $N \in \N^+$.
Then, the following statements hold:
 \begin{itemize}
 	\item[1)] Suppose that $ \trar{U^\star}{N-1} $ admits an exact PCE with respect to the finite-dimensional basis $ \{\phi^j\}_{j=0}^{L-1}$,
 	where
 	\begin{subequations}\label{eq:finite_basis}
 	\begin{align}\label{eq:terms}
	L &= L_{\text{ini}} + N(L_w-1) \in \N^+\vspace*{-0.2cm},  \\
 \label{eq:bases}
	\{\phi^j\}_{j=0}^{L-1} &= \left\{1, \{\phi_{\text{ini}}^j\}_{j=1}^{L_\text{ini}-1}, \bigcup_{k=0}^{N-1} \{\phi_{k}^j\}_{j=1}^{L_w-1} \right\},
\end{align}
 	\end{subequations}
 	with $1=\phi_{\text{ini}}^0(\omega)=\phi_{k}^0(\omega)$ for $k \in \I_{[0,N-1]}$ and all $\omega \in \Omega$. Then, $ \trar{Y^\star}{N-1} $ and $ \trar{X^\star}{N-1} $ also admit exact PCEs  with respect to the finite-dimensional basis~\eqref{eq:finite_basis}.
 	\item[2)] 
 	In absence of chance constraints \eqref{eq:chance_U}--\eqref{eq:chance_Y},  $\mbf{(X^\star,U^\star,Y^\star)}_{[0,N-1]} $ admit  exact PCEs  with respect to the finite-dimensional basis~\eqref{eq:finite_basis}.
 	\item[3)]  Consider chance constraints \eqref{eq:chance_U}--\eqref{eq:chance_Y} and  
 	an  orthogonal infinite-dimensional basis $ \{\phi^j\}_{j=0}^{\infty}$ whose first $L$ terms are given as \eqref{eq:finite_basis}. If $ \mbf{U^\star}_{[0,N-1]} $ admits non-zero PCE coefficients  $ \pce{u}^{\star, \tilde{j}}_{[0,N-1]} \neq 0 $ for some $\tilde{j} \geq L $, then  
 	\[ \mathbf{(\bar{X},\bar{U},\bar{Y})}_{[0,N-1]} = 
 	\Pi^{\mbb L}\left(\mbf{(X^\star,U^\star,Y^\star)}_{[0,N-1]}\right)\] is infeasible in Problem~\ref{Problem1} with $\mbb L = \I_{[0,L-1]}$. \End
 \end{itemize}
\end{proposition}
The proof follows ideas from \cite{muehlpfordt18comments}. It is given in Appendix~B. 

Observe that the basis $ \{\phi^j\}_{j=0}^{L-1}$~\eqref{eq:finite_basis} is the union of the independent bases $\phi_\text{{ini}}$ and $\phi_k$, $k \in \I_{[0,N-1]}$. 
 Hence, $\Xini$ and ${W}_k$, $k \in \I_{[0,N-1]}$ also admit exact PCEs in~\eqref{eq:finite_basis}, i.e.  $\Xini = \sum_{j=0}^{L-1} \tilde{\pce{x}}_{\text{ini}}^j \phi^j$ and $W_k = \sum_{j=0}^{L-1}\tilde{\pce{w}}_k^j \phi^j$ for all $k\in \I_{[0,N-1]} $. Precisely, considering the finite PCEs $\Xini = \sum_{j=0}^{L_{\text{ini}}-1} \xinipce^j \phi_\text{ini}^j$ and ${W}_k = \sum_{j=0}^{L_w-1} \pce{w}_k^j \phi_k^j$  in Assumption~\ref{ass:exact_Ini}, we have
\begin{equation}\label{eq:initial_PCE}
\tilde{\pce{x}}_{\text{ini}}^j =     \begin{cases}
\xinipce^j, & \forall j \in \I_{[0,L_\text{ini}-1]}\\
 0, & \forall j \notin \I_{[0,L_\text{ini}-1]}
\end{cases}, \tilde{\pce{w}}_k^j = \begin{cases}
\pce{w}_k^{\tilde{j} },&\forall j \in  \mbb{L} \\
0,  & \forall j \notin  \mbb{L}
\end{cases} 
\end{equation}
with $\mbb L = \{0\}\cup  \I_{[a,b]}$, $[a,b] = [L_\text{ini}+k(L_w-1),L_\text{ini}+(k+1)(L_w-1)-1]$, and $\tilde{j}= \max\{0,j-a+1\}$.

Moreover, due to the independence of $W_{k}$ at each time instant $k \leq N-1$, we note that, as the prediction horizon $N$ grows, the number of  required terms $L$ for exact PCEs in \eqref{eq:finite_basis} grows linearly.

\begin{remark}[Extension to non-i.i.d. settings]
If the disturbances $W_{k} \in \splx{\dimw}$ are independently but not identically distributed for different $k \in \N$, the considered bases of $W_{k}$ can be structurally different for $k \in \N$. That is, for all $k \in \N$, we consider $W_{k}$ admitting exact PCEs with $L_k$ terms in different bases $\phi_{k}$, i.e. $W_k = \sum_{j=0}^{L_k-1} \pce{w}_k^j \phi_k^j$. Then, the counterpart of basis \eqref{eq:finite_basis} reads $	\{\phi^j\}_{j=0}^{L-1} = \left\{1, \{\phi_{\text{ini}}^j\}_{j=1}^{L_\text{ini}-1}, \bigcup_{k=0}^{N-1} \{\phi_{k}^j\}_{j=1}^{L_k-1} \right\}$ with $L = L_{\text{ini}} + \sum_{k=0}^{N-1}(L_k-1)$. 
\vspace*{1mm}
	
If  $W_{k} \in \splx{\dimw}$ are dependently distributed for all $k \in \N$,  we may suppose that there exists a polynomial basis $\phi_w$ for all $W_{k}$, $k\in\N$ such that $W_{k}$, $k\in\N$ admit exact PCEs with at most $L_w$ terms, i.e. $W_k = \sum_{j=0}^{L_w-1} \pce{w}_k^j \phi_w^j$ for all $k\in\N$. Then we obtain $L =L_{\text{ini}} + L_w-1$ and the basis $\{\phi^j\}_{j=0}^{L-1} = \left\{1,\{\phi_{\text{ini}}\}_{j=1}^{L_\text{ini}-1},\{\phi_{w}\}_{j=1}^{L_w-1}\right\}$. 
\label{rem:dependentWk}
\End
\end{remark}

\begin{remark}[Filtered stochastic processes with PCE]
Consider the basis~\eqref{eq:finite_basis}, for $k \in \I_{[0,N-1]}$ we have 
\begin{equation*}\label{eq:pce_newbasis}
	U_{k} = \pce{u}_{k}^0 + \sum_{j=1}^{L_\text{ini}-1}\pce{u}_{k}^j \phi^j
	+ \sum_{i=0}^{N-1}\sum_{j=L_\text{ini}+i(L_w-1)}^{L_\text{ini}+(i+1)(L_w-1)-1}\pce{u}_{k}^j \phi^j .
\end{equation*}
Then, the causality/non-antipacitivity of the filtration \eqref{eq:filtration} implicitly imposes an additional constraint on the PCE coefficients of the input
\begin{equation} \label{eq:causality_input}
	\pce{u}_{k}^{j} = 0, j\in \I_{[L_\text{ini}+k(L_w-1),L-1]},\, k \in \I_{[0,N-1]}.
\end{equation}
We remark that the causality of $X_k$ and $Y_k$ trivially holds when \eqref{eq:causality_input} is imposed on the PCE coefficient dynamics~\eqref{eq:PCEcoesDynamics}. \vspace*{2mm}\End
\end{remark}

Similar to \cite{farina13probabilistic}, we consider a conservative approximation of chance constraints \eqref{eq:chance_U}--\eqref{eq:chance_Y} in Problem~\ref{Problem1} that reads 
\begin{equation}\label{eq:chance_reformulation}
 \mean[Z_k] \pm \sigma(\varepsilon_z)\sqrt{\var[Z_k]} \in \mbb{Z}, ~k\in \I_{[0,N-1]},~Z \in\{U,Y\},
\end{equation}
with
$\sigma(\varepsilon_z) = \sqrt{(2-\varepsilon_z)/\varepsilon_z}, z\in \{u,y\}$ \cite{farina13probabilistic}. Especially, for Gaussian random variables,  \eqref{eq:chance_U}--\eqref{eq:chance_Y} and \eqref{eq:chance_reformulation} are equivalent when $\sigma(\varepsilon_z)$ is chosen with respect to the standard normal table.  We remark that the square-root is applied component-wise in \eqref{eq:chance_reformulation}.  Now, we are ready to state the PCE reformulation of Problem \ref{Problem1}. 

\begin{problem}[Stochastic OCP in PCE coefficients]\label{Problem2} ~\\
	Let Assumption~\ref{ass:exact_Ini} hold. Consider the  finite-dimensional basis $\{\phi^j\}_{j=0}^{L-1}$  determined by \eqref{eq:finite_basis}, then the PCE reformulation of Problem \ref{Problem1} reads
\begin{subequations}\label{eq:PCE_SOCP}
\begin{gather}
     \min_{
\substack{
	\text{for }j \in \I_{[0,L-1]} \\
 \mbf{\pce{x}}^{j}_{[0,N-1]}\in \R^{N\dimx}\\
 \mbf{\pce{u}}^{j}_{[0,N-1]} \in \R^{N\dimu}\\
  \mbf{\pce{y}}^{j}_{[0,N-1]}\in \R^{N\dimy}
 }
 }
 \sum_{k=0}^{N-1} \sum_{j=0}^{L-1} \Big(\| \pcecoe{y}{j}_{k}\|^2_Q  +\|\pcecoe{u}{j}_{k}\|^2_R\Big) \langle \phi^j\rangle^2 \\
     \text{subject to }\quad  \forall j\in \set{I}_{[0,L-1]}, \, k \in \set{I}_{[0,N-1]} \nonumber \\
  \pcecoe{x}{j}_{k+1}=   A \pcecoe{x}{j}_{k}+B\pcecoe{u}{j}_{k} 
+E\tilde{\pce{w}}^{j}_{k},\quad   \pcecoe{x}{j}_0  =\tilde{\pce{x}}_{\text{ini}}^{j}, \label{eq:PCEdynamicinOCP}\\
 \pcecoe{y}{j}_{k}=   C \pcecoe{x}{j}_{k}+D\pcecoe{u}{j}_{k}, \label{eq:PCEdynamicinOCP_y}\\
       \pcecoe{u}{0}_{k}  \pm \sigma(\varepsilon_u)\sqrt{\sum_{j=1}^{L-1} {(\pcecoe{u}{j}_{k}})^2\langle \phi^j\rangle^2} \in \mathbb U, \label{eq:PCE_chance_U}\\
     \pcecoe{y}{0}_{k}   \pm \sigma(\varepsilon_y)\sqrt{\sum_{j=1}^{L-1} {(\pcecoe{y}{j}_{k})}^2\langle \phi^j\rangle^2} \in  \mathbb Y, \label{eq:PCE_chance_Y}\\
      \pce{u}_{k}^{j'}= 0,~ \forall j'\in \I_{[L_{\text{ini}}+k(L_w-1),L-1]} \label{eq:causality}.
    \end{gather} 
where $\tilde{\pce{x}}_{\text{ini}}^j$ and  $\tilde{\pce{w}}_k^j$, $j \in \I_{[0,L-1]}$, $k \in \I_{[0,N-1]}$ are given in \eqref{eq:initial_PCE}.
    \End
\end{subequations}
\end{problem}
The chance constraint reformulation from \eqref{eq:chance_U}--\eqref{eq:chance_Y}  to  \eqref{eq:PCE_chance_U}--\eqref{eq:PCE_chance_Y} follows by expressing the moment terms in \eqref{eq:chance_reformulation} via PCE, cf. \eqref{eq:MomentsPCE}. 

\subsection{Data-Driven Stochastic Optimal Control}

So far we have assumed knowledge of the initial state, i.e. the distribution of $\Xini$ and its finite and exact PCE $\sum_{j=0}^{L_{\text{ini}}-1} \xinipce^j \phi_\text{ini}^j$ (Assumption~\ref{ass:exact_Ini}). However, in the context of data-driven control, the information about the state is usually considered to be unknown. Thus, instead of assuming that $\Xini$ admits an exact PCE,  we move to an assumption on the past $\Tini$-steps input,  disturbance, and output trajectory of \eqref{eq:RVdynamics}, $(\mbf{U},\mbf{W},\mbf{Y})_{[-\Tini,-1]}$  with $\Tini$ being larger than the system lag~$l$. 

\begin{definition}[System lag \cite{markovsky06}] The lag of system \eqref{eq:RVdynamics} is defined as the smallest integer $l$ for which 
	\[\mcl{O}_{l}\doteq \left[ C^\top~(CA)^\top~\cdots~(CA^{l-1})^\top \right]^\top
	\]
	has full rank. \End
\end{definition}

\begin{assumption}[Exact PCEs of consistency data] \label{ass:exact_Ini_UY} ~\\
We assume that $\Tini$ is larger than the system lag $l$ and the past $\Tini$-steps input,  disturbance, and that the output  trajectories of \eqref{eq:RVdynamics}, $(\mbf{U},\mbf{W},\mbf{Y})_{[-\Tini,-1]}$,  admit exact PCEs  with $L_\text{ini}$ terms, i.e. $Z_k = \sum_{j=0}^{L_{\text{ini}}-1}\pce{z}_k^j\phi_{\text{ini}}^j$ for all $k \in \I_{[-\Tini,-1]}$ and $(Z,\pce{z}) \in \{(U,\pce{u}),(W,\pce{w}),(Y,\pce{y})\} $. Moreover, for all $k \in \I_{[0,N-1]}$, the i.i.d. $W_{k}$ admit  exact PCEs with $L_w$ terms, i.e. $W_k = \sum_{j=0}^{L_w-1} \pce{w}_k^j \phi_k^j$. \End
\end{assumption}
Likewise, we construct the finite-dimensional basis $ \{\phi^j\}_{j=0}^{L-1}$ as the union of $\phi_\text{{ini}}$ and $\phi_k$, $k \in \I_{[0,N-1]}$, cf.\eqref{eq:finite_basis}. Hence, for all $k \in \I_{[-\Tini,-1]}$, we have the exact PCEs $Z_k = \sum_{j=0}^{L-1}\tilde{\pce{z}}_k^j\phi^j$  in basis~\eqref{eq:finite_basis} with
\begin{equation}\label{eq:initial_PCE_UY}
	\tilde{\pce{z}}_k^j =     \begin{cases}
		\pce{z}_k^j, &\forall j \in \I_{[0,L_\text{ini}-1]}\\
		0,  & \forall j \notin \I_{[0,L_\text{ini}-1]}
	\end{cases}, \quad \pce{z} \in \{\pce{u},\pce{w},\pce{y}\},
\end{equation}
and the exact PCEs $W_k = \sum_{j=0}^{L-1}\tilde{\pce{w}}_k^j\phi^j$, $k \in \I_{[0,N-1]}$ as given in \eqref{eq:initial_PCE}.

Next, we give a  data-driven reformulation of \eqref{eq:PCE_SOCP}  exploiting  Corollary~\ref{cor:mixed_funda} and Lemma~\ref{lem:RVfundamental}.

\begin{problem}[Data-driven Stochastic OCP]\label{Problem3}~\\
Let Assumption~\ref{ass:exact_Ini_UY} hold, and let the finite-dimensional basis~$\{\phi^j\}_{j=0}^{L-1}$  be given  by \eqref{eq:finite_basis}. Suppose that  realization data $\trar{(u,w,y)}{T-1}$ of \eqref{eq:RVdynamics} is given with $\trar{(u,w)}{T-1}$ persistently exciting of order $\dimx +N+\Tini$. Then, the data-driven reformulation of Problem~\ref{Problem2} reads
\begin{subequations}\label{eq:H_PCE_SOCP}
\begin{align}
   & \min_{
\substack{ \text{for }j \in \I_{[0,L-1]} \\
 \pcecoe{u}{j}_{[-\Tini,N-1]}\in \R^{(N+\Tini)\dimu} \\
  \pcecoe{y}{j}_{[-\Tini,N-1]} \in \R^{(N+\Tini)\dimy}\\
  \pcecoe{g}{j}\in \R^{(T-N-\Tini+1)} \\
  }
 }
   \sum_{k=0}^{N-1} \sum_{j=0}^{L-1} \Big(\| \pcecoe{y}{j}_{k}\|^2_Q  +\|\pcecoe{u}{j}_{k}\|^2_R\Big)\langle \phi^j\rangle^2 \label{eq:H_PCE_SOCP_obj} \\
 &   \text{subject} \text{ to } \forall  j\in \set{I}_{[0,L-1]}, \nonumber   \\
 &\quad \begin{bmatrix}
 	\Hankel_{N+\Tini}(\trar{u}{T-1})\\
 	\Hankel_{N+\Tini}(\trar{y}{T-1})\\
 	\Hankel_{N+\Tini}(\trar{w}{T-1})\\
 \end{bmatrix}    
 \pcecoe{g}{j}= \begin{bmatrix}     
 	\pce{u}^j_{[-\Tini,N-1]}\\
 	\pce{y}^j_{[-\Tini,N-1]}\\     
 	\tilde{\pce{w}}^j_{[-\Tini,N-1]}\\
 \end{bmatrix} 
 , \label{eq:H_PCE_SOCP_hankel}
     \end{align} 
 \begin{align}
&
\pcecoe{u}{j}_{
	[-\Tini,-1]} =\tilde{\pce{u}}^{j}_{
	[-\Tini,-1]}, \quad \pce{y}^{j}_{
	[-\Tini,-1]} =   \tilde{\pce{y}}^{j}_{
	[-\Tini,-1]}, 
 \label{eq:intitialDeePC}
\\
&\hspace{3cm} \eqref{eq:PCE_chance_U} -\eqref{eq:causality}, \nonumber 
    \end{align} 
 where
 $\tilde{\pce{z}}_k^j $, $j \in \I_{[0,L-1]}$ are given in \eqref{eq:initial_PCE_UY} for $k\in \I_{[-\Tini,-1]}$, $\tilde{\pce{z}} \in \{\tilde{\pce{u}},\tilde{\pce{w}},\tilde{\pce{y}}\}$, and $\tilde{\pce{w}}_k^j $, $j \in \I_{[0,L-1]}$ are given in \eqref{eq:initial_PCE}  for $k\in \I_{[0,N-1]}$. \End
\end{subequations}
\end{problem}
\begin{remark}[Consistency condition via last $\Tini$ realizations] \label{rem:ini_via_realizations}
	We remark that one straight-forward specification of $(\mbf{U},\mbf{W},\mbf{Y})_{[-\Tini,-1]}$ in Assumption~\ref{ass:exact_Ini_UY} are the (observed/measured) realization values  $(\mbf{u},\mbf{w},\mbf{y})_{[-\Tini,-1]}$. In this case, we have $L_\text{ini} = 1$, i.e. $Z_k =z_k\phi_{\text{ini}}^0$ for all $k \in \I_{[-\Tini,-1]}$ and $(Z,\pce{z}) \in \{(U,\pce{u}),(W,\pce{w}),(Y,\pce{y})\} $ with $\phi_{\text{ini}}^0=1$.
	 \End
\end{remark}

\begin{remark}[Numerical solution with null-space projection]
For the sake of readability, in this remark we use  the short-hand notations $\Hankel_z$ and $\pce{z}^j$ instead of  $\Hankel_{N+\Tini}(\trar{z}{T-1})$ and $\pce{z}^j_{[-\Tini,N-1]}$ for $ (\mbf{z}, \pce{z})\in\{(\mbf{u}, \pce{u}),(\mbf{w}, \pce{w}),(\mbf{y}, \pce{y}) \}$.
If $\Hankel_w\in \R^{(N+\Tini) \dimw\times (T-N-\Tini+1)}$ is of full  row rank and  the PCE coefficients $\pce{w}^j$ are known, the null-space method can be employed to reduce the dimensionality of the decision variables, i.e. $\pcecoe{g}{j}$ in Problem \eqref{eq:H_PCE_SOCP}. To this end, we choose a matrix $M_w\in \R^{(T-N-\Tini+1)\times (T-(N+\Tini)(\dimw +1)+1)}$ whose columns span the null space of $\Hankel_w$. The core idea is to parametrize $\pcecoe{g}{j}$  in the equality constraint
$\Hankel_w \pcecoe{g}{j}=  \pce {w}^j$
as
\[	\pcecoe{g}{j} =   M_w {\pce h^j} + \Hankel_w^\rinv \pce {w}^j,\]
where ${\pce h^j}\in \R^{T-(N+\Tini)(\dimw +1)+1}$ and $\Hankel_w^\rinv$ denotes the Moore-Penrose inverse of $\Hankel_w$.
Thus, substitution of the above equation into  \eqref{eq:H_PCE_SOCP_hankel} yields a simplified and numerically favourable expression 
\[
		\begin{bmatrix} \Hankel_u \\ \Hankel_y\end{bmatrix} \left(M_w  {\pce h^j} + \Hankel_w^\rinv \pce {w}^j\right) = \begin{bmatrix} \pce{u}^j \\ \pce{y}^j \end{bmatrix}, \forall j\in \set{I}_{[0,L]}.\tag*{\End}
\]
\end{remark}\vspace*{0.1mm}

\begin{remark}[Solution with regularization]
	Observe that small data pertubations of the consistency conditions~\eqref{eq:intitialDeePC} might jeopardize the feasibility of OCP \eqref{eq:H_PCE_SOCP}. To overcome this issue, one adds the slack variable $s \in \R^{\dimy \Tini L\ini}$ to  \eqref{eq:intitialDeePC}, i.e.,
	\[
  \pce{y}^{j}_{
	[-\Tini,-1]} =   \tilde{\pce{y}}^{j}_{
	[-\Tini,-1]} + s^j.
	\]
Here with  slight abuse of notation $s^j$ denotes the corresponding elements of $s$ for $j \in \I_{[0,L_\text{ini}-1]}$.	Consequently, the penalty term  $\beta \|s\|_1$ with $\beta \gg 0$ is added to the objective.
 The use of slack variables is widely considered in deterministic data-driven predictive control. We refer to  \cite{Coulson2019} for insights into the one-norm penalization and to \cite{Berberich20} for the two-norm penalization.
\End
\end{remark}

\begin{remark}[Multiple-shooting implementation]~\\
We note that compared to the LTI model in \eqref{eq:PCEdynamicinOCP} the  equality constraint \eqref{eq:H_PCE_SOCP_hankel} increases the computational burden due to the large dense Hankel matrices. 
To overcome this issue, \cite{o2021data} suggests segmenting the prediction horizon into shorter intervals and using Hankel matrices of smaller dimension. Furthermore, the solution pieces in consecutive intervals  are coupled by continuity constraints. This idea resembles the classic concept of multiple shooting~\cite{Bock1984} in the data-driven setting. In~\cite{Ou23}, we tailor this concept to the data-driven stochastic OCP \eqref{eq:H_PCE_SOCP}. Moreover, combined with moment matching, one can  show that the dimension of the PCE basis and the number of decision variables can be reduced substantially. For details we refer to~\cite{Ou23}.\End
\end{remark}

\subsection{Equivalence of Stochastic OCPs}
The next result summarizes equivalence conditions for Problems~\ref{Problem1}--\ref{Problem3}. To this end, we define the set of the optimal input-output trajectories for each problem as follows:
\begin{align*}
	\mathcal S_1  &\doteq\left\{\mbf{(U^\star,Y^\star)}_{[0,N-1]}\,\middle |\,\begin{gathered} 
\exists \,\mbf{X}^\star_{[0,N-1]} \text{ s.t.}\\ 
\mbf{(X^\star,U^\star,Y^\star)}_{[0,N-1]} \\
 \text{is optimal  in  Problem~\ref{Problem1},}\\
	\text{ given  $\Xini$, $W_k$, $k \in \I_{[0,N-1]}$} 
\end{gathered}\right\}, 
\\
	\mathcal S_2  &\doteq\left\{\begin{gathered}
		 (\pce{u}^\star,\pce{y}^\star)^{j}_{[0,N-1]},\\ j \in \I_{[0,L-1]}
	\end{gathered}
		\,\middle |\,\begin{gathered} 
	\forall j \in \I_{[0,L-1]}, \,
\exists \, \pce{x}^{j,\star}_{[0,N-1]} \text{ s.t.}\\ 
	(\pce{x}^\star,\pce{u}^\star,\pce{y}^\star)^j_{[0,N-1]}\\
	\text{ is  optimal  in Problem~\ref{Problem2},}\\
	\text{ given   $\tilde{\pce{x}}^j_{\text{ini}}$, $\tilde{\pce{w}}^j_{k}$}, k \in \I_{[0,N-1]}
\end{gathered}\right\}, 
\\
		\mathcal S_3  &\doteq\left\{\begin{gathered} (\pce{u}^\star,\pce{y}^\star)^{j}_{[0,N-1]},\\ j \in \I_{[0,L-1]}
	\end{gathered}
	\,\middle |\,
	\begin{gathered} 
		\forall j \in \I_{[0,L-1]},  \\	((\pce{u}^{\star},\pce{y}^{\star})_{[-\Tini,N-1]},\pce{g}^{\star})^j,  \\
		\text{ is optimal  in Problem~\ref{Problem3},}\\
		\text{ given  $(\tilde{\pce{u}},\tilde{\pce{w}},\tilde{\pce{y}})^j_{[-\Tini,-1]}$,} \\  \tilde{\pce{w}}^j_{k}, k \in \I_{[0,N-1]} 
\end{gathered}\right\}.
\end{align*}
Note that $\mathcal S_1$ is a subset of $ \splx{(\dimu+\dimy)N}$ but $\mathcal S_2$ and $\mathcal S_3$ are subsets of $\R^{(\dimu+\dimy)NL}$. We consider the  map $\Psi:  \R^{(\dimu+\dimy)N L} \to   \splx{(\dimu+\dimy)N}$
\begin{align*}
	\Psi:
\pce{s}^{[0,L-1]} \mapsto \sum_{j=0}^{L-1}\pce{s}^j\phi^j \in  \splx{(\dimu+\dimy)N}.
\end{align*}

\begin{theorem}[Equivalence of stochastic OCPs]\label{thm:equivOCP} 
	Consider the stochastic LTI system \eqref{eq:RVdynamics}. Suppose that the pairs $(A,[B~E])$ and $(A,C)$ are, respectively, controllable and observable. Let Assumption~\ref{ass:exact_Ini_UY} hold and consider the finite-dimensional basis  $ \{\phi^j\}_{j=0}^{L-1}$  determined by  \eqref{eq:finite_basis}. Let the given realization data $\trar{(u,w,y)}{T-1}$ of \eqref{eq:RVdynamics} with $\trar{(u,w)}{T-1}$ be persistently exciting of order $\dimx +N+\Tini$. Then following statements hold:
\begin{itemize}
\item[(i)] There exists a $\Xini$ that admits an exact PCE with $L_\text{ini}$ terms such that $\mcl S_2 = \mcl S_3$.
\item[(ii)] Moreover, suppose that the chance constraints \eqref{eq:chance_U}--\eqref{eq:chance_Y} are box constraints,  i.e. $\mbb{Z} = [\underline z, \bar z]$ for $\mbb{Z} \in \{\mbb{U},\mbb{Y}\}$. If the reformulation from \eqref{eq:chance_U}--\eqref{eq:chance_Y}  to \eqref{eq:chance_reformulation} 
is exact, then for any $\Xini$ admitting an exact PCE with $L_\text{ini}$ terms, \[\mcl S_1 = \Psi(\mcl S_2)\]
holds, where $\Psi(\mcl S_2)$ is the element-wise image of $\mcl S_2$. \End
\end{itemize}
\end{theorem}
The proof is given in Appendix~C.

\subsection{Estimation of Past Process  Disturbance Realizations}
\label{sec:estimation}
So far we assumed the knowledge of past process  disturbance realizations $\trar{w}{T-1}$.
Thus, in this section, we focus on the  disturbance estimation for a simplified AutoRegressive with eXtra input (ARX) model as follows
 \begin{subequations}
 	\begin{equation}\label{eq:ARX}
 			Y_{k} = \widetilde A \mathbf{Y}_{[k-\Tini,k-1]} + \widetilde{B} \mathbf{U}_{[k-\Tini,k-1]} +W_{k-1}, 
 		\end{equation}
 where  $\widetilde A \in \R^{\dimy\times \Tini \dimy}$, $\widetilde B \in \R^{\dimy\times \Tini\dimu}$. Moreover, we assume that \eqref{eq:ARX} admits a minimal state-space realization as \eqref{eq:RVdynamics} with $D = 0$ and some $E\in \R^{\dimx \times \dimy}$.  For the construction of minimal state-space realizations of \eqref{eq:ARX} we refer to \cite[Lemma~2]{Sadamoto2022}.
  We remark that for the case of state measurements, i.e. $Y=X$ and $\Tini =1$, \eqref{eq:ARX} is equivalent to $X^+ =AX +BU +W$ which is a common model structure in the context of stochastic optimal control. Similar to Section~\ref{sec:problem_statement} we consider the process  disturbance $W_k$, $k\in \N$ to be i.i.d. random variables whose underlying distribution  is known.

 With given $(\mbf{u},\mbf{y})_{[-\Tini,-1]}$ and $w_k$, $k\in \I_{[-1,\infty)}$, we can specify the realization dynamics of \eqref{eq:ARX} as
\begin{equation}\label{eq:ARX_realization}
	y_{k} = \widetilde A \mathbf{y}_{[k-\Tini,k-1]} + \widetilde{B} \mathbf{u}_{[k-\Tini,k-1]}  + w_{k-1}, ~k \in \N.
\end{equation}
\end{subequations}

\begin{proposition}[Consistency of  realization data]\label{pro:noise_consistency}
	 Consider  realization data $(\mbf{u},\mbf{w},\mbf{y})_{[0,T]}$ of \eqref{eq:ARX_realization} for  given $(\mbf{u},\mbf{y})_{[-\Tini,-1]}$ and  unknown  disturbance realizations $w_{k}$, $k\in \I_{[-1,T]}$. Let  $I_T$ be an identity matrix of size $T$ and 
	 \begin{equation}\label{eq:matrixS}
	 S \doteq  \begin{bmatrix}
	 	\Hankel_{\Tini}\left(\mbf{y}_{[1-\Tini,T-1]}\right) \\
	 	\Hankel_{\Tini}\left(\mbf{u}_{[1-\Tini,T-1]}\right)
	 \end{bmatrix} \in \R^{\Tini(\dimu+\dimy) \times T},
	\end{equation} then the input,  disturbance, and output realizations satisfy
	\begin{equation}\label{eq:leftkern_ARX}
		\left( \Hankel_{1}\left(\mbf{y}_{[1,T]}\right)-\Hankel_{1}\left(\mbf{w}_{[0,T-1]}\right) \right)\left( I_T- S^\rinv S\right) =0,
	\end{equation}
 where $S^\rinv$ denotes the Moore-Penrose inverse of $S$. \End
\end{proposition}
\begin{proof}
	By horizontally stacking \eqref{eq:ARX_realization} for $k \in \I_{[1,T]}$, we have
	\begin{equation*}
		\begin{aligned}
			\Hankel_{1}\left(	\mbf{y}_{[1,T]}\right)& =\left[\widetilde A~\widetilde{B}\right] \begin{bmatrix}
				\Hankel_{\Tini}\left(\mbf{y}_{[1-\Tini,T-1]}\right) \\
				\Hankel_{\Tini}\left(\mbf{u}_{[1-\Tini,T-1]}\right)
			\end{bmatrix}
			  +\Hankel_{1}\left(\mbf{w}_{[0,T-1]}\right) \\
			&	=\left[\widetilde A~\widetilde{B}\right] 	S+\Hankel_{1}\left(\mbf{w}_{[0,T-1]}\right) .
		\end{aligned}
	\end{equation*}
	Using  $SS^\dagger S = S$ we obtain
	\[		\Hankel_{1}\left(	\mbf{y}_{[1,T]}\right)	=\left[\widetilde A~\widetilde{B}\right]  S S^\dagger S+\Hankel_{1}\left(\mbf{w}_{[0,T-1]}\right).\]
After substituting $\left[\widetilde A~\widetilde{B}\right]  S $ with  $\Hankel_{1}\left(	\mbf{y}_{[1,T]}\right)- \Hankel_{1}\left(\mbf{w}_{[0,T-1]}\right)$, we have
	\begin{align*}
		\Hankel_{1}&\left(	\mbf{y}_{[1,T]}\right)\\&= 	\left( \Hankel_{1}\left(\mbf{y}_{[1,T]}\right)-\Hankel_{1}\left(\mbf{w}_{[0,T-1]}\right) \right)S^\dagger S + \Hankel_{1}\left(\mbf{w}_{[0,T-1]}\right),
\end{align*}
	which is equivalent to \eqref{eq:leftkern_ARX}.
\end{proof}
\begin{corollary}\label{cor:estimate_system}
Consider  realization data $(\mbf{u},\mbf{w},\mbf{y})_{[0,T]}$ of \eqref{eq:ARX_realization} for  given $(\mbf{u},\mbf{y})_{[-\Tini,-1]}$ and  unknown  disturbance realizations $w_{k}$, $k\in \I_{[-1,T]}$. Then, for any $(\mbf{u},\hat{\mbf{w}},\mbf{y})_{[0,T]}$ satisfying \eqref{eq:leftkern_ARX}, there exist $\widehat{A} \in \R^{\dimy\times \Tini \dimy}$ and $\widehat{B} \in \R^{\dimy\times \Tini\dimu}$ such that $(\mbf{u},\hat{\mbf{w}},\mbf{y})_{[0,T]}$ satisfies the system equation
\begin{equation*}
	y_{k} =\widehat{A}  \mathbf{y}_{[k-\Tini,k-1]} + \widehat{B} \mathbf{u}_{[k-\Tini,k-1]}  + \hat{w}_{k-1}. \tag*{\End}
\end{equation*}
\end{corollary}
\begin{proof}
We note that $\left( I_T- S^\rinv S\right)$ is the orthogonal projector onto the kernel of $S$. Thus, $(\mbf{u},\hat{\mbf{w}},\mbf{y})_{[0,T]}$ satisfying \eqref{eq:leftkern_ARX} implies that its projection onto the kernel of $S$ is zero; in other words, it lies in the image space of $S$. Thus, for each tuple $(\mbf{u},\hat{\mbf{w}},\mbf{y})_{[0,T]}$ satisfying \eqref{eq:leftkern_ARX}, there exists a matrix $M\in \R^{\dimy\times \Tini (\dimy+\dimu)}$ such that 
\[
\Hankel_{1}\left(\mbf{y}_{[1,T]}\right)-\Hankel_{1}\left(\hat{\mbf{w}}_{[0,T-1]}\right) = M S.
\]
Then, for each row of the above equation, we have
\[
	y_{k} = M \begin{bmatrix}
	\mathbf{y}_{[k-\Tini,k-1]} \\
    \mathbf{u}_{[k-\Tini,k-1]}
	\end{bmatrix}  + \hat{w}_{k-1}, \quad \forall k \in \I_{[1,T]}. 
\]
By splitting $M$ into $[\widehat{A}~\widehat{B}]$ the assertion follows.
 \end{proof}
As shown in Corollary~\ref{cor:estimate_system}, any $(\mbf{u},\hat{\mbf{w}},\mbf{y})_{[0,T]}$ satisfying \eqref{eq:leftkern_ARX} implicitly determines an LTI system. Thus,  the usual Hankel matrix equations, i.e. \eqref{eq:mixed_funda} and \eqref{eq:RVfunda},  stated in Lemma~\ref{lem:RVfundamental} and Corollary~\ref{cor:mixed_funda}, hold for $(\mbf{u},\hat{\mbf{w}},\mbf{y})_{[0,T]}$. 

Notice that \eqref{eq:leftkern_ARX} admits infinitely many solutions  $\mbf{w}_{[0,T-1]}$. However, we can utilize the knowledge about the distribution to formulate the \textit{ maximum likelihood estimate}
\begin{subequations}\label{eq:noise_estimation}
\begin{align}\label{eq:noise_estimation_MLE}
	\hat{\mbf{w}}_{[0,T-1]}= \argmin_{\mbf{w}_{[0,T-1]}} -\sum_{k=0}^{T-1} \log p_w(w \inst{k}), ~ \text{s.t. } \eqref{eq:leftkern_ARX},
\end{align}
where $w\inst{k}$ is the realization of $W$ at time  $k$, and $p_w$ is the probability density function of the i.i.d. $W_k$, $k \in \N$. 

Alternatively, one can rely on the least-squares estimate 
\begin{align}\label{eq:noise_estimation_LSE}
	\hat{\mbf{w}}_{[0,T-1]} = \argmin_{\mbf{w}_{[0,T-1]} } \|  \mbf{w}_{[0,T-1]} -\mean[\mbf{W}_{[0,T-1]}]\|^2, ~ \text{s.t. } \eqref{eq:leftkern_ARX},
\end{align}
which admits the closed-form solution
\begin{equation*}
	\Hankel_1(\hat{\mbf{w}}_{[0,T-1]}) = \left(\Hankel_{1}\left(\mbf{y}_{[1,T]}\right) - \Hankel_1(\mean[\mbf{W}_{[0,T-1]}]) \right) ( I_T-S^\rinv S).
\end{equation*} 
\end{subequations}

Specifically, for Gaussian-distributed disturbances, i.e., $w\inst k \sim \mathcal{N}(0,\sigma^2)$, $k \in \set{I}_{[0,T-1]}$, the maximum likelihood estimate~\eqref{eq:noise_estimation_MLE} is equivalent to the least-squares estimate~\eqref{eq:noise_estimation_LSE}. For uniformly distributed $w\inst{k} \sim \mathcal{U}([-a,a]), k \in \set{I}_{[0,T-1]}$, the maximum likelihood estimate \eqref{eq:noise_estimation_MLE} does not admit a unique solution due to the constant probability  density function. Therefore, we apply \eqref{eq:noise_estimation_LSE} in this case.

\subsection{Conceptual Framework for Data-Driven Stochastic Predictive Control}

Combining the results of the previous sections, we propose the following data-driven stochastic MPC scheme based on OCP~\eqref{eq:H_PCE_SOCP} as summarized in Algorithm~\ref{alg:datadrivenSMPC}. In the offline data collection and pre-processing phase, random inputs $\trar{u}{T}$ are generated to obtain
$\trar{y}{T}$. The  disturbance realizations $\trar{w}{T-1}$ are assumed to be given, or they can be estimated by \eqref{eq:noise_estimation} when the simplified model~\eqref{eq:ARX} is considered. During the online optimization phase, we consider recursively solving the data-driven stochastic OCP \eqref{eq:H_PCE_SOCP} with given last $\Tini$ realizations $\mbf{(u,w,y)}_{[k-\Tini,k-1]}$, cf. Remark~\ref{rem:ini_via_realizations},  at the current time instant $k$. 
 	
 Observe that to specify the consistency data $\mbf{(u,w,y)}_{[k-\Tini,k-1]}$, one needs to measure or estimate the realization of past disturbances online. For the case of unknown disturbance realizations, we do the online estimation of $w_{k-1}$ by appending $(u,y)_{[k-\Tini,k-1]}$ to the offline input-output data $S$ in \eqref{eq:matrixS}. That is, we employ the disturbance estimation~\eqref{eq:noise_estimation} with respect to $S'=[S|(u,y)_{[k-\Tini,k-1]}]$ and the last element of the estimated disturbance sequence gives $w_{k-1}$.
 Then, we solve the data-driven stochastic OCP \eqref{eq:H_PCE_SOCP} for  $(\pcecoe{u}{j,\star}_{[-\Tini,N-1]}, \pcecoe{y}{j,\star}_{[-\Tini,N-1]},\pcecoe{g}{j,\star})$, $j\in \I_{[0,L-1]}$. Observe that the PCE coefficient  $\pcecoe{u}{0,\star}_0$ on $\phi^0=1$
is applied to system  \eqref{eq:RVdynamics} as the current feedback.

A detailed analysis of the closed-loop properties of the proposed data-driven stochastic predictive control framework is beyond the scope of the present paper and postponed to future work. Instead we demonstrate its efficacy via examples.

\begin{algorithm}[t]
        \caption{ Data-driven  stochastic predictive control} \label{alg:datadrivenSMPC}
\algorithmicrequire $T, N, L \in \N^+ $, $\mbf{(u,y)}_{[-\Tini,-1]}$, $k \leftarrow 0$\\
\textbf{Data collection and pre-processing (offline):}
\begin{algorithmic}[1]
\State Select uniformly random distributed $\trar{u}{T}\in \mbb{U}^{T+1}$\label{step:pre_1}
\State Apply $\trar{u}{T}$ to system \eqref{eq:RVdynamics}, record  $\trar{y}{T}$ 
\State Measure  / estimate $\trar{w}{T-1}$ by \eqref{eq:noise_estimation}
\State If $\trar{(u,w)}{T-1}$ is persistently of exciting of order less than $N+n_x$, go to Step \ref{step:pre_1}, else go to  Step  \ref{step:pre_5}
\State Construct \eqref{eq:H_PCE_SOCP} with $\trar{(u,w,y)}{T-1}$ 
\label{step:pre_5}
\end{algorithmic}
\textbf{Predictive control loop (online):}
\begin{algorithmic}[1]
\State   Measure $\mbf{(u,y)}_{[k-\Tini,k-1]}$, estimate $w_{k-1}$ by \eqref{eq:noise_estimation} \label{step:MPC_0}
\State  Solve $\eqref{eq:H_PCE_SOCP}$ with respect to  $\mbf{(u,w,y)}_{[k-\Tini,k-1]}$ \label{step:MPC_1}
\State  Apply $u_{k} = \pcecoe{u}{0,\star}_0$ to system \eqref{eq:RVdynamics}\label{step:MPC_2}
\State  $k\leftarrow k+1$, go to Step \ref{step:MPC_0}
\end{algorithmic}
\end{algorithm}

\section{Numerical Examples}\label{sec:simulation}
We consider two examples for data-driven stochastic OCPs and predictive control via PCE. Moreover, the least-square estimate~\eqref{eq:noise_estimation_LSE} is employed to reconstruct the disturbance realizations. In the first (scalar) example, we consider Gaussian distributed disturbance as well as uniformly distributed disturbance. The second example considers discrete-time stochastic predictive control for an aircraft. Furthermore, we also verify that the proposed disturbance estimation performs well in open-loop OCPs and predictive control problems. To implement the numerical examples in \texttt{julia}, we rely on the toolboxes \texttt{PolyChaos.jl}~\cite{muehlpfordt19polychaos} and \texttt{JuMP.jl}~\cite{Dunning17}.

\subsection{Scalar Dynamics}
We consider the scalar OCP from \cite{grune13economic} and  its stochastic extension proposed in \cite{Ou21}. The dynamics are
	$X_{k+1} = 2X_k + U_k +W_k$,
where $W_k$ is a random disturbance with known distribution, and $X\ini$ follows the uniform distribution $\mcl U([0.2, 1.0])$. The matrices $Q$ and $R$ in \eqref{eq:stochasticOCP_obj} are $Q=0$ and $R=1$ while the state chance constraint reads
$\mathbb{P}[X \in \set{X}] \geq 1- \varepsilon_x$ with $\set{X} = [-2,2]$ and $\varepsilon_x = 0.2$. Thus, we have $\sigma(\varepsilon_x)=3$ in  \eqref{eq:chance_reformulation}. We solve this example as one open-loop OCP  with horizon $N=25$.

\subsubsection*{Gaussian distributed disturbance}

We suppose that no recorded disturbance data $\mathbf{\lowercase{w}}$ is available and that for all $k\in \mbb{I}_{[0, N-1]}$ the disturbances $W_k$, $k\in \N$ are i.i.d.  Gaussian $\mcl N(0,0.5^2)$. We apply \eqref{eq:noise_estimation_LSE} with  $1000$ samples to reconstruct the disturbance realizations. To this end, we use the feedback law $u_k=-1.1x_k + v_k$, where $v_k$ is randomly uniformly sampled from $[-0.1,0.1]$ for $k \in \I_{[0,999]}$, during the data collection phase. Moreover, we construct the Hankel matrices from the first 100 recorded input/state and estimated noise. This way, we ensure that the data is persistently exciting and the magnitude of state will not increase exponentially. We remark that, in the general context of data-driven control, the design of persistently exciting input sequences for unstable systems is still an open problem.

We choose Legendre polynomials as PCE basis with $L\ini=2$ for $X\ini$ and Hermite polynomials with $L_w=2$ for $W_{k}$ such that Definition \ref{def:exact_pce} is satisfied as shown in Table \ref{tab:askey_scheme}.  Consequently, we obtain $L=27$ from  \eqref{eq:terms} and the exactness of the PCEs is ensured. The solutions of the open-loop OCP~\eqref{eq:H_PCE_SOCP}  are depicted in Figure~\ref{fig:scalar_compare_normal}. Therein we compare the solution with estimated disturbance realizations to the one with exact knowledge of disturbance realizations. For the sake of simplicity, only the expected value and variance of $X$ and $U$ are plotted instead of PCE coefficients. It can be seen that the solution with estimated disturbance realizations from \eqref{eq:noise_estimation_LSE} matches the one with exact knowledge of disturbance realizations well.
More precisely, the maximum difference of the solutions of OCP~\eqref{eq:H_PCE_SOCP} with disturbance measurement and estimation in terms of the first two moments of $X$ and $U$ is $3.71\cdot 10^{-3}$. That is, \eqref{eq:H_PCE_SOCP} with estimated disturbance \eqref{eq:noise_estimation_LSE} provides a slightly suboptimal solution without much performance loss.
Furthermore, the maximum difference of the solutions of OCP~\eqref{eq:PCE_SOCP} and \eqref{eq:H_PCE_SOCP} with disturbance measurement in terms of the first two moments of $X$ and $U$ is $4.83\cdot 10^{-5}$, which shows that \eqref{eq:H_PCE_SOCP} is equivalent to model-based stochastic control if exact knowledge of disturbance realizations is available. 

\begin{figure}[t!]
	\begin{center}
		\includegraphics[width=8.4cm]{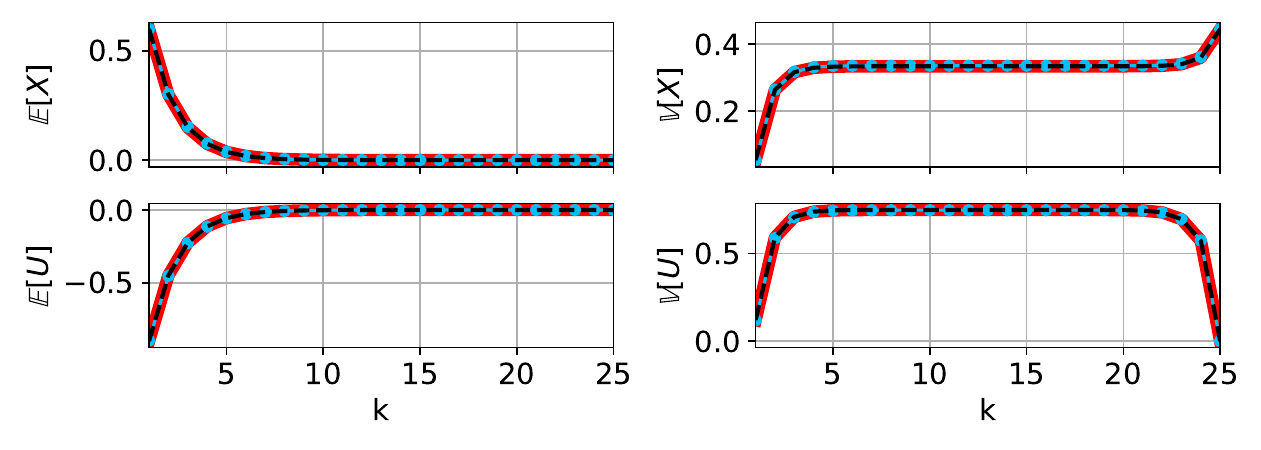}
		\caption{Scalar example with Gaussian disturbance. Red-solid line: solution of \eqref{eq:H_PCE_SOCP} with disturbance measurement; blue-solid line with circle marker: solution of \eqref{eq:H_PCE_SOCP} with disturbance estimation; black-dashed line: solution of \eqref{eq:PCE_SOCP}.
		} 
		\label{fig:scalar_compare_normal}
	\end{center}
\end{figure}

\begin{figure}[t!]
	\begin{center}
		\includegraphics[width=8.4cm]{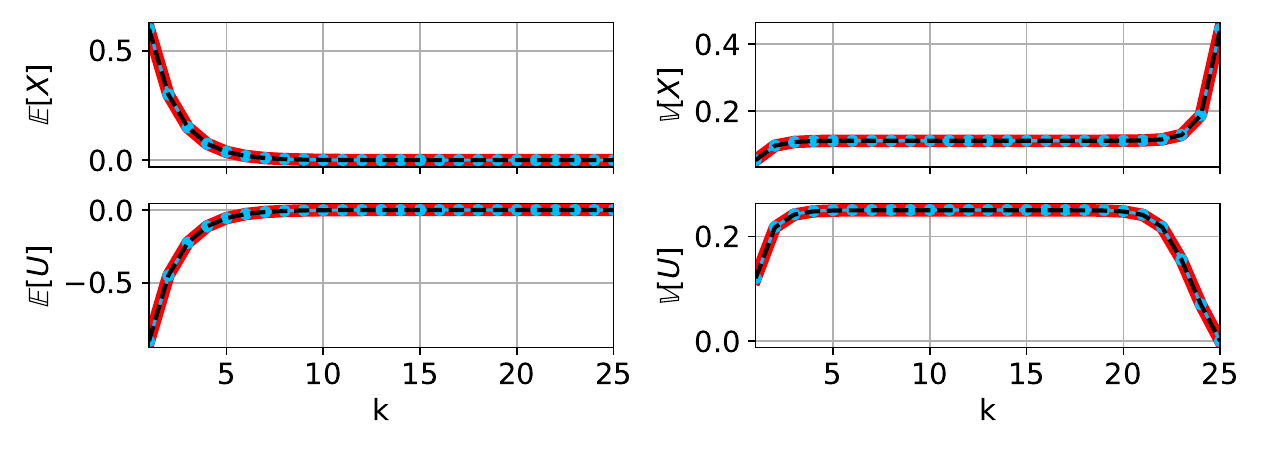}
		\caption{Scalar example with uniform disturbance. Red-solid line: solution of \eqref{eq:H_PCE_SOCP} with disturbance measurement; blue-solid line with circle marker: solution of \eqref{eq:H_PCE_SOCP} with disturbance estimation; black-dashed line: solution of  \eqref{eq:PCE_SOCP}.
		} 
		\label{fig:scalar_compare_uniform}
	\end{center}
\end{figure}

To further illustrate the minuscule differences of the solutions in Figure~\ref{fig:scalar_compare_normal},  we compare the underlying optimal value functions and the PCE solutions $\pce{x}_k^{j}$, with $k \in \I_{[0,N-1]}$ and $j \in \I_{[0,L-1]}$ in Table~\ref{tab:scalar_open_compare}. Here $V_N^\text{m}$, $V_N$, and $V_N^\text{e}$ are the optimal value functions of the model-based OCP~\eqref{eq:PCE_SOCP} (superscript $\cdot^\text{m}$) and, respectively, of the data-driven OCP~\eqref{eq:H_PCE_SOCP} with disturbance measurement or estimation  (superscript $\cdot^\text{e}$). Moreover, we introduce
	\[
	\delta_x\doteq \max_{j \in \I_{[0,L-1]}, k \in \I_{[0,N-1]}} \left|\pce{x}_k^{j}-\pce{x}_k^{j,{\text{m}}}\right|
	\]
	and similarly  $\delta_x^{\text{e}}$ to denote the maximum difference of OCP~\eqref{eq:PCE_SOCP} and OCP~\eqref{eq:H_PCE_SOCP} with disturbance measurement or estimation respectively.   As shown in the first row of Table~\ref{tab:scalar_open_compare}, the solutions of OCP~\eqref{eq:PCE_SOCP} and OCP~\eqref{eq:H_PCE_SOCP} with disturbance measurement are almost the same, while OCP~\eqref{eq:H_PCE_SOCP} with disturbance estimation provides also a slightly less accurate result. 
 
The observations above  are inline with  Theorem~\ref{thm:equivOCP}, i.e., OCP~\eqref{eq:H_PCE_SOCP} is the data-driven equivalent of OCP~\eqref{eq:PCE_SOCP} when persistency of excitation is satisfied and when the disturbance distributions as well as previous disturbance realizations  are known.

\subsubsection*{Uniformly distributed disturbance}
To verify the performance of the disturbance estimation~\eqref{eq:noise_estimation_LSE} with respect to uniformly distributed disturbance, we suppose $W_k$ follows a uniform distribution $\mcl U([-0.5,0.5])$. Note that the uniformly distributed disturbance admits an exact PCE with $L_w=2$ terms in the basis of the Legendre polynomials
and thus $L= 27$ as before. We also record state/input trajectories with horizon $T=1000$ and reconstruct the disturbance realizations by \eqref{eq:noise_estimation_LSE}. The solutions of the OCPs are illustrated in Figure \ref{fig:scalar_compare_uniform}. As one can see, the  estimation \eqref{eq:noise_estimation_LSE} also handles uniformly distributed disturbance. Similarly, the maximum difference of the all solutions in terms of the first two moments of $X$ and $U$ is $1.05\cdot 10^{-3}$.  We also observe the similarity of the solutions for uniformly distributed  $W_K$, cf. the second row of Table~\ref{fig:scalar_compare_normal}.

\begin{table}[t!]
	\caption{Comparison of the solutions of \eqref{eq:PCE_SOCP}, \eqref{eq:H_PCE_SOCP} with disturbance measurement, and \eqref{eq:H_PCE_SOCP}  with disturbance estimation of the scalar example for both Gaussian and uniform disturbances. }
	\label{tab:sclarcost}
		\centering
		\begin{adjustbox}{width=\columnwidth,center}
		\begin{tabular}{cccccc}
			\toprule
			 Cases &  $V_N^\text{m}$ $[-]$ &  $V_N$ $[-]$ & $V_N^\text{e}$ $[-]$ &$ \delta_x$  $[-]$ &$\delta_x^{\text e}$  $[-]$\\
		\midrule
	Gaussian & 17.993 & 17.993&17.961 & $3.106\times 10^{-5}$ & $3.269\times 10^{-3}$\\
			Uniform & \phantom{1}6.588 & \phantom{1}6.588 & \phantom{1}6.605 & $9.121\times 10^{-5}$ & $1.419\times 10^{-3}$ \\
			\bottomrule
		\end{tabular} \label{tab:scalar_open_compare}
		\end{adjustbox}
\end{table}

\subsection{Aircraft Example}

As a second example, we use the LTI aircraft model 
given in \cite{maciejowski02predictive} exactly discretized with sampling time $t_s=0.5~\text{s}$.
The system matrices are
\begin{subequations}
\begin{align*}
	A &= \begin{bmatrix} \phantom{-0}0.240 & \phantom{0}0\phantom{0.} & 0.179 & 0 \\
		-\phantom{0}0.372 & \phantom{0}1\phantom{.0} & 0.270 & 0 \\
		-\phantom{0}0.990 & \phantom{0}0\phantom{.0} & 0.139 & 0 \\
		-48.9\phantom{00} & 64.1 & 2.40\phantom{0} & 1\\
	\end{bmatrix},~
	B = \begin{bmatrix} -1.23 \\ -1.44 \\ -4.48 \\ -1.8\phantom{0} \end{bmatrix},
	\end{align*}
	\begin{align*}
	C &= \begin{bmatrix} \phantom{-00}0\phantom{.0} & \phantom{00}1\phantom{.0} & 0 & 0 \\
		\phantom{-00}0\phantom{.0} & \phantom{00}0\phantom{.0} & 0 & 1 \\
		-128.2 & 128.2 & 0 & 0\end{bmatrix},
	\hspace{21pt}D = 0_{3\times 1}.
\end{align*}
\end{subequations}
We consider a Gaussian disturbance $W_k$ affecting the input-output dynamics in form of \eqref{eq:ARX}, where $W_k, k\in \N$ are i.i.d. vector-valued random variables with 
$\Sigma [W_k,W_k] = \text{diag}([ 10^{-4}, 16, 0.16])$.
The weighting matrices in the objective function are $Q=\text{diag}([3.2407, 1.3695, 7.9270])$ and $R = 5188.25$. A chance constraint is imposed on $Y^1$
\[
	\mathbb{P}[Y^1 \in \set{Y}^1] \geq 1- \varepsilon_y,
\]
where $\set Y^1 = [-0.349,0.349]$ and $ \varepsilon_y = 0.1$. Correspondingly, we find $\sigma(\varepsilon_y)=1.645$ according to the standard normal table. 
We compare four different data-driven schemes:
\begin{enumerate}[label=\Roman*]
		\item Algorithm~\ref{alg:datadrivenSMPC} with disturbance measurement \label{alg:meas}
		\item Algorithm~\ref{alg:datadrivenSMPC} with disturbance estimation	\label{alg:est}
		\item Algorithm~\ref{alg:datadrivenSMPC} with dist. estimation and truncated PCEs \label{alg:trun}
		\item data-driven deterministic predictive control with slack variables~~\cite{Coulson2019}. \label{alg:deter}
\end{enumerate}

We apply Algorithm~\ref{alg:datadrivenSMPC} with prediction horizon $N=10$ and exact output feedback, i.e., the realization of $Y_k$ is known upon solving each OCP. Similar to before, in the data collection phase we record input-output trajectories of $1000$ steps to estimate the disturbance realizations via~\eqref{eq:noise_estimation_LSE}. We use the first $90$ recorded inputs-outputs and estimated disturbances to  construct the Hankel matrices. 

To obtain an exact PCE for each component of $W_k$, we employ the Hermite polynomials component-wise such that $L_w=4$.  Since there is no uncertainty caused by initial condition, i.e. $L\ini=1$, we obtain from \eqref{eq:terms} the dimension of the overall PCE basis as $L=31$. Considering an initial condition close to $y\ini = [0,-400,0]^\top$ and using identical disturbance realizations, we compute the closed-loop responses for Schemes~\ref{alg:meas}-\ref{alg:est}. 

To speed up the computation, one may limit the number of terms in the PCEs of $(U,W,Y)$. Hence in Scheme~\ref{alg:trun}, we truncate the PCE of $W_k$ after the first $4$ terms. That is, at each time step $k$, we solve the OCP over the horizon $j\in [k, k+N-1]$ considering only  the first disturbance $W_k$ and setting the PCE coefficients $\mathsf w_j=0$ for $j>k$. 

For the sake of comparison, we also use Scheme~\ref{alg:deter} proposed in \cite{Coulson2019}, see Figure~\ref{fig:AircraftComparison}. Note that a low-rank approximation of the Hankel matrices via singular value decomposition and truncation is applied to Scheme~\ref{alg:deter} to filter the disturbance~\cite{Coulson2019}.\footnote{The low-rank approximation method is only mentioned in the extended arXiv version of \cite{Coulson2019}.} We observe that the system input and response trajectories for Scheme~\ref{alg:meas} and Scheme~\ref{alg:est} are almost identical. Moreover, Figure~\ref{fig:AircraftComparison} illustrates that Algorithm~\ref{alg:datadrivenSMPC} drives the system to origin with a better performance, i.e., with faster speed and better robustness to the disturbance than Scheme~\ref{alg:deter}. 

The computation times and the closed-loop costs $J^{\text{cl}}$ of the realized closed-loop trajectories in Figure~4 
are summarized in Table~\ref{tab:AircraftComparisonTime}. All considered schemes are compared by the mean and the Standard Deviation (SD) of the computation time of each OCP evaluated in the closed loop. Compared to the time needed for solving data-driven OCPs, the computational effort for online disturbance estimation for the consistency condition is negligible. As Table~\ref{tab:AircraftComparisonTime} shows, for Scheme~\ref{alg:trun}  the computation  is significantly accelerated while the suboptimality is minor. 
More precisely, evaluating the closed-loop realization trajectories over time, we compute the closed-loop costs defined as $J^{\text{cl}}\doteq\sum_{k=0}^{39}\left[\|y_k\|^2_Q+\|u_k\|^2_R\right]$ for different schemes in Table~\ref{tab:AircraftComparisonTime}. 
The performance loss due to the truncated PCEs, i.e. $(J^{\text{cl}}_{\text{III}}-J^{\text{cl}}_{\text{II}})/ J^{\text{cl}}_{\text{II}}$, is 4.43\% and the maximal input difference is $1.06\cdot 10^{-3}\,\SI{}{rad}$ for this specific example. An intuitive explanation for this phenomenon is that in the closed predictive control loop only the first element of the input solution is applied to the system. Thus, the next upcoming disturbance is dominant. Moreover, comparing the closed-loop costs of Schemes~\ref{alg:meas} and~\ref{alg:est}, we observe that in our simulations  Scheme~\ref{alg:est} with estimated disturbances performs slightly better than Scheme~\ref{alg:meas}. However, a detailed analysis of the performance and robustness of Algorithm~\ref{alg:datadrivenSMPC} with truncated PCEs and disturbance estimation is left for future research. 

Additionally, we sample $50$ sequences of disturbance realizations. The corresponding closed-loop realization trajectories of Scheme~\ref{alg:est} are shown in Figure~\ref{fig:AircraftTraj}. It can be seen that the chance constraint for  $Y^1$ is satisfied with a high probability.
Moreover, we sample a total of $1000$ sequences of disturbance realizations and initial conditions around $[0,-400,0]^\top$. Then we compute the corresponding closed-loop responses of Scheme~\ref{alg:est}. The time evolution of the (normalized) histograms of the output realizations $y^2$ at $k=0, 10, 20, 30, 40$ is shown in Figure~\ref{fig:AircraftDistEvolution}, where the vertical axis refers to the probability density of $Y^2$. As one can see, the proposed control scheme achieves a narrow distribution of~$Y^2$ around $0$.

\begin{figure}[t!]
	\begin{center}
		\includegraphics[width=8.4cm]{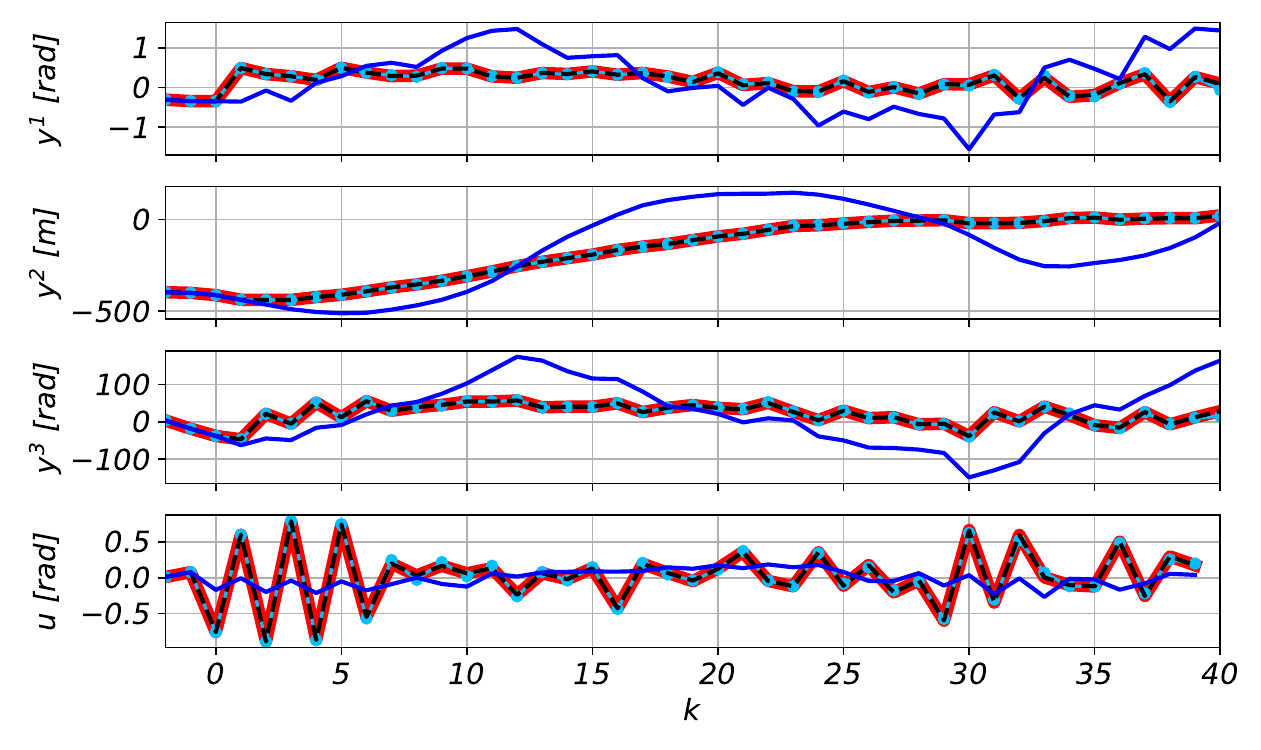}
		\caption{Aircraft example with Gaussian disturbance. Red-solid line: Scheme~\ref{alg:meas}; blue-solid line with circle marker: Scheme~\ref{alg:est}; black-dashed line: Scheme~\ref{alg:trun}; deep blue-solid line: Scheme~\ref{alg:deter}.} \label{fig:AircraftComparison} 		
	\end{center}
\end{figure}

\begin{table}[t!]
	\caption{Comparison of the computation times in \texttt{Julia} and the closed loop costs for the realized closed-loop trajectories in Figure~\ref{fig:AircraftComparison}}.
	\label{tab:AircraftComparisonTime}
	\centering
	\begin{adjustbox}{width=\columnwidth,center}
		\begin{tabular}{cccccc}
			\toprule
			\multirow{2}{*}{\shortstack{\\ \\ Data-driven\\ scheme}} &  \multicolumn{2}{c}{OCP} & \multicolumn{2}{c}{Disturbance estimation} & 	\multirow{2}{*}{$J^\text{cl}$ $[-]$}\\
			\cmidrule(lr){2-3} \cmidrule(lr){4-5} & Mean $\SI{}{[s]}$ & SD $\SI{}{[s]}$  & Mean $\SI{}{[s]}$ & SD $\SI{}{[s]}$ &\\
			\midrule
			\ref{alg:meas} & 0.788 & 0.110 & n.a. & n.a. & $1.339\times 10^{3}$ \\
			\ref{alg:est} & 0.795 & 0.115 & 0.021 & 0.011 & $1.220\times 10^{3}$\\
			\ref{alg:trun} & 0.399 & 0.245 & 0.019 & 0.011 & $1.274\times 10^{3}$\\
			\ref{alg:deter} & 0.194 & 0.052 & n.a. & n.a.  & $1.622\times 10^{5}$\\
			\bottomrule
		\end{tabular}
	\end{adjustbox}
\end{table}

\begin{figure}[t!]
	\begin{center}
		\includegraphics[width=8.4cm]{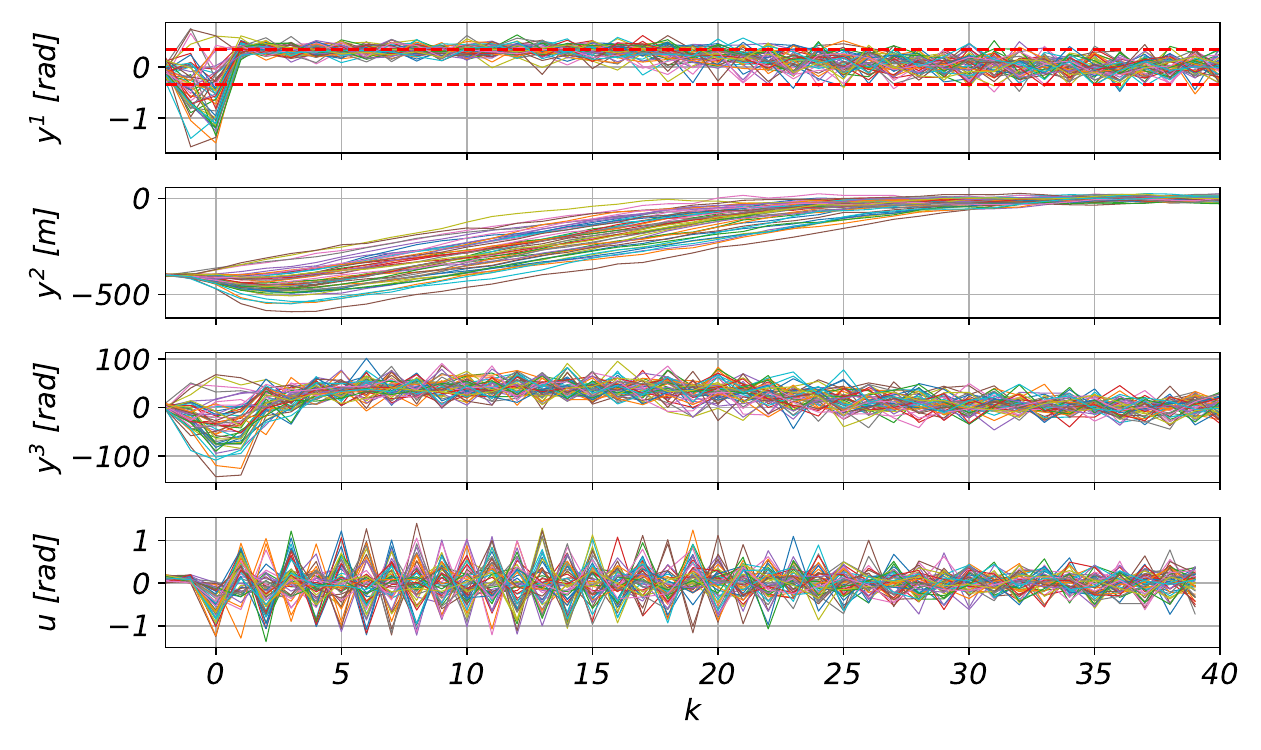}
		\caption{50 different closed-loop realization trajectories of Scheme~\ref{alg:est}. The red-dashed lines represent the chance constraints.} \label{fig:AircraftTraj} 		
	\end{center}
\end{figure}

\begin{figure}[t!]
	\begin{center}
		\includegraphics[width=0.3975\textwidth]{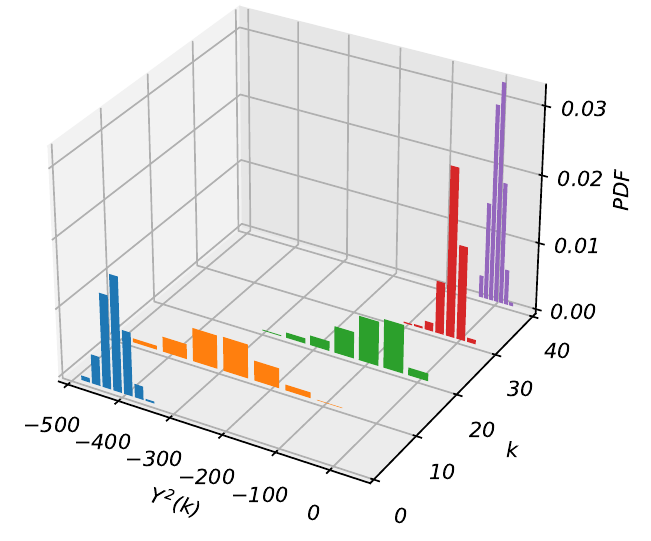}
		\caption{Histograms of the output $Y^2$ from closed-loop realization trajectories of Scheme~\ref{alg:est}.}
		\label{fig:AircraftDistEvolution}
	\end{center}
\end{figure}

\section{Conclusions and Outlook}\label{sec:conclusion}

This paper has addressed the extension of data-driven control and system analysis towards stochastic systems. Specifically, we have given an extension of the fundamental lemma for  stochastic LTI systems. The crucial insight of our analysis is that instead of formulating Hankel matrices in terms of random variables, it suffices to consider Hankel matrices constructed from past input-disturbance-output realizations. We have formalized this insight in terms of results on column-space equivalence, respectively, column-space inclusion and we have leveraged it to propose a framework for data-enabled uncertainty quantification and propagation via polynomial chaos expansions. Moreover, we have shown that Hankel matrices formulated directly in terms of random variables do not necessarily allow to characterize the full system behaviour.

As a by-product for our analysis, we have touched upon the estimation of past disturbance realizations from input-output data. Finally, we have shown by means of simulation examples that the proposed approach paves the road towards data-driven stochastic optimal and predictive control. 

At the same time, our results highlight the need for further and deeper investigations. This entails 1) a full-fledged behavioral characterization of stochastic systems, their PCE representations, and the relation between both,\footnote{In a follow-up to this paper we provide first results in this direction~\cite{tudo:faulwasser22f}.}  2) the consideration of output data corrupted by measurement noise, 3) the robustness analysis with respect to truncation errors in the series expansions and with respect to errors in the estimation of past disturbance realizations, and 4) the closed-loop analysis of data-driven stochastic output-feedback predictive control.

\section*{Appendices}\label{sec:appen} 
\subsection{Galerkin Projection} 
For the sake of completeness, we recap the concept of Galerkin Projection, cf. \cite{GhanSpan03,muhlpfordt20}. Given the map $f: \splx{n_x} \to \splx{n_y}$ and the PCE of $X$ with respect to  the orthogonal polynomials $\{\phi^j\}_{j=0}^{\infty}$ Galerkin projection can be used to obtain the PCE coefficients of the image random variable $Y = f(X)$. 
It consists of the following steps:
\begin{itemize}
	\item[1)] Substitute X and Y with their PCEs
	\[ \sum_{j=0}^{\infty} \pce{y}^j \phi^j = f\left(\sum_{j=0}^{\infty} \pce{x}^j \phi^j\right).
	\]
	\item[2)] For all $i \in \N$, project onto the basis $\phi^i$ 
	\[
	\left\langle \sum_{j=0}^{\infty} \pce{y}^j \phi^j, \phi^i\right\rangle = \left\langle f\left(\sum_{j=0}^{\infty} \pce{x}^j \phi^j\right), \phi^i\right\rangle.
	\]
	\item[3)] Solve
	\[
	\pce{y}^i = \frac{\left\langle f\left(\sum_{j=0}^{\infty} \pce{x}^j \phi^j\right), \phi^i\right\rangle}{\langle\phi^i\rangle^2}. 
	\]
\end{itemize}
In case of an affine mapping $f(x) = Ax+b$, we have
\begin{align*}
	\pce{y}^i &= \frac{\left\langle \sum_{j=0}^{\infty} \left(A\pce{x}^j+b\right) \phi^j, \phi^i\right\rangle}{\langle\phi^i\rangle^2},\\
	&= \frac{\left\langle A\sum_{j=0}^{\infty} \pce{x}^j \phi^j + \sum_{j=0}^{\infty} b \phi^j, \phi^i\right\rangle}{\langle\phi^i\rangle^2} = A\pce{x}^i+b,
\end{align*}
which follows from the affinity of $f$ and  the orthogonality of $\{\phi^j\}_{j=0}^{\infty}$.

\subsection{Proof of Proposition \ref{pro:no_truncation_error}} 
\begin{proof}
For the sake of readability, we omit the subscript $[0,N-1]$ whenever there is no ambiguity. 

\textit{Part 1):}
With Assumption~\ref{ass:exact_Ini}, since \eqref{eq:bases} is the union of the independent bases $\phi_\text{{ini}}$ and $\phi_k$, $k \in \I_{[0,N-1]}$, we observe that $\Xini$ and $W_k$, $k \in \I_{[0,N-1]}$ also admit exact PCEs in the constructed finite-dimensional basis~\eqref{eq:finite_basis}. Moreover, for linear systems~\eqref{eq:OCPdynamic}-\eqref{eq:OCPdynamic_Y}, $\mbf{X}^\star$ and $\mbf{Y}^\star$ are linked to $\Xini$, $W_k$, $k \in I_{[0,N-1]}$, and $\mbf{U}^\star$ via an affine mapping. Hence, suppose that $\mbf{U}^\star$ admits an exact PCE with $L$ terms in the basis~\eqref{eq:finite_basis}, $\mbf{X}^\star$ and $\mbf{Y}^\star$ also admit exact PCEs with $L$ terms in the same basis as $\mbf{U}^\star$, cf. \cite{muehlpfordt18comments}.

\textit{Part 2):} The proof is done by contradiction.  That is,  consider the infinite orthogonal basis $ \{\phi^j\}_{j=0}^{\infty}$ whose first $L$ terms are given by \eqref{eq:finite_basis}. We suppose that $\mbf{U}^\star$ admits non-zero PCE coefficients beyond basis~\eqref{eq:finite_basis}, then we prove that there exists a feasible trajectory with a smaller value of the objective function.

For the sake of contradiction, suppose $\mbf{U}^\star$ has non-zero PCE coefficients beyond basis~\eqref{eq:finite_basis}. Then, this also applies to $\mbf{Y}^\star$ and $\mbf{X}^\star$, cf. the proof of statement 1). In other words, there exists a $\tilde{k}\in \I_{[0,N-1]}$ and $\tilde{j}\geq L$ such that
	\begin{equation}\label{eq:nonzeroPCE}
		\pce{u}_{\tilde{k}}^{\tilde{j},\star} \neq 0,~ \text{for some}~\tilde{j}\geq L. 
	\end{equation}
Consider the  truncation of $(\mbf{X}^\star,\mbf{Y}^\star, \mbf{U}^\star)$ after the first $L$ PCE terms via $\Pi^{\mbb{L}}$ with $\mbb{L} = \mbb{I}_{[0, L-1]}$ from \eqref{eq:PIprojection},  i.e. \[ \mathbf{(\bar{X},\bar{U},\bar{Y})} = 
	\Pi^{\mbb{L}}\left(\mbf{\left(X^\star,U^\star,Y^\star\right)}\right).\]
Since the trajectory tuple $(\pce{x}_k^{j,\star},\pce{y}_k^{j,\star},\pce{u}_k^{j,\star})$, $k\in\I_{[0,N-1]}$ satisfies the PCE dynamics \eqref{eq:PCEcoesDynamics} for every $j\in \I_{[0,L-1]}$, it is straightforward to see that $(\bar{X}_k,\bar{Y}_k,\bar{U}_k)$, $k\in\I_{[0,N-1]}$ satisfies~\eqref{eq:RVdynamics}. Moreover, in absence of chance constraints, $(\bar{\mbf{X}},\bar{\mbf{U}},\bar{\mbf{Y}})$ is a feasible trajectory of Problem~\ref{Problem1}. 

In addition, we reformulate the objective function \eqref{eq:stochasticOCP_obj} as
\[
\sum_{k=0}^{N-1} \sum_{j=0}^{\infty} \Big(\| \pcecoe{y}{j}_{k}\|^2_Q +\|\pcecoe{u}{j}_{k}\|^2_R\Big)\langle \phi^j\rangle^2.
\]
Observe that
\[ 
\sum_{k=0}^{N-1} \sum_{j=L}^{\infty}  \Big(\| \pcecoe{y}{j,\star}_{k}\|^2_Q +\|\pcecoe{u}{j,\star}_{k}\|^2_R\Big) \geq  \|\pcecoe{u}{\tilde{j},\star}_{\tilde{k}}\|^2_R >0
\] holds for $Q\succeq 0$,  $R \succ 0$ and all $\pcecoe{u}{\tilde{j},\star}_{\tilde{k}}$, $\tilde j \in \mbb{I}_{[L, \infty)}$, $\tilde{k}\in\I_{[0,N-1]}$ satisfying \eqref{eq:nonzeroPCE}.
Hence 
$(\bar{\mbf{X}},\bar{\mbf{U}},\bar{\mbf{Y}})$ admits a smaller objective value than $(\mbf{X}^\star,\mbf{U}^\star,\mbf{Y}^\star)$.  Thus we arrive at a contradiction which shows statement 2).

\textit{Part 3):} The proof of Part  2) shows that the truncated solution $\mathbf{(\bar{X},\bar{U},\bar{Y})} = 
\Pi^{\mbb{L}}\left(\mbf{\left(X^\star,U^\star,Y^\star\right)}\right)$ admits a smaller objective value than $(\mbf{X}^\star,\mbf{U}^\star,\mbf{Y}^\star)$. If $(\bar{\mbf{X}},\bar{\mbf{U}},\bar{\mbf{Y}})$ is a feasible solution, it will contradict the statement that $(\mbf{X}^\star,\mbf{U}^\star,\mbf{Y}^\star)$ is an optimal solution. Thus, if $\mbf{U}^\star$ does admit non-zero PCE coefficients beyond first $L$ terms, it implies that $(\bar{\mbf{X}},\bar{\mbf{U}},\bar{\mbf{Y}})$ is infeasible. In other words, the consideration of extra basis terms beyond~\eqref{eq:finite_basis} has to contribute to  feasibility. 
\end{proof}

\subsection{Proof of Theorem \ref{thm:equivOCP}} 
As a preparatory step, the following lemma gives a sufficient condition when Part 3) of Proposition~\ref{pro:no_truncation_error} can be strengthened without the a-priori assumption of $\mbf{U}^\star$ admitting an exact PCE.
\begin{lemma}[Exact PCE solution with box constraints]\label{lem:exactPCE_chance}
	Let Assumption~\ref{ass:exact_Ini} hold. Consider the optimal solution $\mbf{(X^\star,U^\star,Y^\star)}_{[0,N-1]}$ of Problem~\ref{Problem1} with  finite horizon $N \in \N^+$. If the chance constraint reformulation from \eqref{eq:chance_U}--\eqref{eq:chance_Y} to \eqref{eq:chance_reformulation} is exact and \eqref{eq:chance_U}--\eqref{eq:chance_Y} are box constraints,  i.e. $\mbb{Z} = [\underline z, \bar z]$ for $\mbb{Z} \in \{\mbb{U},\mbb{Y}\}$, then  $\mbf{(X^\star,U^\star,Y^\star)}_{[0,N-1]} $ admits an exact PCE with respect to the finite dimensional basis~\eqref{eq:finite_basis}. \End
\end{lemma}
\begin{proof}
	The proof follows the same ideas as the proof in Part~2) of Proposition \ref{pro:no_truncation_error}. The crucial difference in our proof below is that the feasibility of the truncated solution $(\bar{\mbf{X}},\bar{\mbf{U}},\bar{\mbf{Y}})$ holds for the given assumptions. 
	
	First, with non-zero PCE coffcients~\eqref{eq:nonzeroPCE}, from \eqref{eq:MomentsPCE} we note that $\mbb V[\bar{Z}_k]\leq \mbb V [Z_k^\star]$, $Z\in\{U,Y\}$. 
	Therefore, if $(\mbf{U}^\star,\mbf{Y}^\star)$ satisfy the chance constraint reformulation~\eqref{eq:chance_reformulation} with $\mbb{Z} = [\underline z, \bar z]$ for $\mbb{Z} \in \{\mbb{U},\mbb{Y}\}$, the truncated variables $(\bar{\mbf{X}},\bar{\mbf{U}},\bar{\mbf{Y}})$ also satisfy \eqref{eq:chance_reformulation} with $\mbb{Z} = [\underline z, \bar z]$ for $\mbb{Z} \in \{\mbb{U},\mbb{Y}\}$.
Hence, we conclude the feasibility of $(\bar{\mbf{X}},\bar{\mbf{U}},\bar{\mbf{Y}})$ whenever the reformulation between \eqref{eq:chance_U}--\eqref{eq:chance_Y} to \eqref{eq:chance_reformulation}  is exact and when \eqref{eq:chance_U}--\eqref{eq:chance_Y} are box constraints.
\end{proof}

Now, we are ready to prove Theorem~\ref{thm:equivOCP}.

\begin{proof}
	\textit{Part 1)}: 
First, we prove that with Assumption~\ref{ass:exact_Ini_UY}, there exists a $\Xini$ that admits an exact PCE with $L_\text{ini}$ terms.
 With Assumption~\ref{ass:exact_Ini_UY}, we have
 $Z_k = \sum_{j=0}^{L_{\text{ini}}-1}\pce{z}_k^j\phi_{\text{ini}}^j$ for $k \in \I_{[-\Tini,-1]}$ and $(Z,\pce{z}) \in \{(U,\pce{u}),(W,\pce{w}),(Y,\pce{y})\} $.
 For each $j \in \I_{[0, L_{\text{ini}}]}$, the PCE coefficient trajectory  $(\pce{u}^j,\pce{w}^j,\pce{y}^j)_{[-\Tini,-1]}$ satisfies~\eqref{eq:PCEcoesDynamics}. Since \eqref{eq:PCEcoesDynamics} is observable, with $\Tini$ being larger than the system lag, we can determine an internal state $\pce{x}^j_\text{ini}$ for the PCE coefficients trajectory at time instant $0$ from $(\pce{u}^j,\pce{w}^j,\pce{y}^j)_{[-\Tini,-1]}$. Thus, one can recover the internal state  $\Xini=\sum_{j=0}^{L_{\text{ini}}-1} \pce{x}^j_\text{ini}\phi_\text{ini}^j$ for the random variable trajectory and this $\Xini$ admits an exact PCE with $L_\text{ini}$ terms.

Second, the data-driven reformulation of \eqref{eq:PCEdynamicinOCP} and \eqref{eq:PCEdynamicinOCP_y} to \eqref{eq:H_PCE_SOCP_hankel} follows Corollary~ \ref{cor:mixed_funda} with $\trar{(u,w)}{T-1}$ persistently exciting of order $N+\Tini+\dimx$.  Thus, we conclude that  with $\Tini$ being larger than the system lag, one can recover the internal state $\Xini$ from the consistency data in Assumption~\ref{ass:exact_Ini_UY}, and with this $\Xini$ Problem~\ref{Problem2} admits the same optimal input-output solution set as Problem~\ref{Problem3}, i.e. $\mcl S_2 = \mcl S_3$.

\textit{Part 2)}:  Suppose the chance constraint reformulation from \eqref{eq:chance_U}--\eqref{eq:chance_Y} to \eqref{eq:chance_reformulation} is equivalent (exact) and  \eqref{eq:chance_U}--\eqref{eq:chance_Y} are box constraints.  As shown in Lemma~\ref{lem:exactPCE_chance}, for any $\Xini$ with an exact PCE of $L_\text{ini}$ terms,   $\mbf{(X^\star,U^\star,Y^\star)}_{[0,N-1]} $ admit exact PCEs with respect to the finite-dimensional basis~\eqref{eq:finite_basis}.
 Thus, the PCE reformulation of random variables from Problem~\ref{Problem1} to Problem~\ref{Problem2} is without truncation error. Hence, we conclude that $\mcl S_1 = \Psi(\mcl S_2)$ holds whereby $\Psi(\mcl S_2)$ is the element-wise application of the map.
\end{proof}

\bibliographystyle{IEEEtran}
\bibliography{IEEEabrv}

\end{document}